\documentclass[12pt]{article}

\PassOptionsToPackage{numbers,sort&compress}{natbib}

\usepackage[utf8]{inputenc}
\usepackage[T1]{fontenc}
\usepackage{hyperref}
\usepackage{booktabs}
\usepackage{amsmath,amssymb,amsthm,mathtools}
\usepackage{graphicx}
\usepackage{subcaption}
\usepackage{enumitem}
\usepackage{xcolor}
\usepackage{rotating}
\usepackage[nameinlink,noabbrev]{cleveref}
\usepackage[margin=1in]{geometry}
\usepackage{natbib}

\graphicspath{{../}{../results_block3/}}
\definecolor{gray}{gray}{0.4}
\definecolor{orange}{RGB}{220,120,0}
\hypersetup{colorlinks=true,linkcolor=blue,citecolor=blue,urlcolor=blue}

\newtheorem{proposition}{Proposition}
\newtheorem{theorem}{Theorem}
\newtheorem{corollary}{Corollary}

\title{Minimum Specification Perturbation: Robustness as Distance-to-Falsification in Causal Inference}

\author{
Hoang Dang\\
Independent Researcher\\
\texttt{hoangdang112023@gmail.com}
\and
Luan Pham\\
RMIT University, Australia\\
\texttt{luan.pham@rmit.edu.au}
\and
Minh Nguyen\\
Florida Atlantic University, USA\\
\texttt{minhnguyen@fau.edu}
}

\date{\today}

\begin{document}

\maketitle

\begin{abstract}
Empirical causal claims depend on many analyst decisions, from selecting covariates to choosing estimators. Existing robustness tools summarize how results vary across these choices, but, to the best of our knowledge, do not answer: \textbf{How many analyst decisions must change to reach a specification, which is a set of choices, whose confidence interval (CI) contains zero?} We introduce \emph{Minimum Specification Perturbation (MSP)}, the smallest number of changes. MSP is small under the null, grows with effect strength and captures distance-to-falsification information that dispersion-based summaries cannot report; when making decisions under weak effects, an MSP-based rule yields lower false-positive rates than dispersion-based rules. We show that Fragility Index and MSP measure orthogonal vulnerabilities: fragility to influential observations need not imply fragility to specification choices. On the LaLonde benchmark, MSP = 1 implies that one decision change makes the CI contain zero. We further provide exact permutation calibration under randomization and characterize computation, showing tractable cases under additive structure and NP-hardness in general.
\end{abstract}

\section{Introduction}
Empirical causal claims depend on many defensible choices: estimator, covariates, functional form, trimming, and sample restrictions. Robustness checks summarize how estimates move across choices. They do not show how many coordinated changes are needed to overturn the conclusion.

This gap matters. A typical empirical study involves many defensible choices~\citep{gelman2013garden,simmons2011false}, and different analysts working on the same data often reach different conclusions depending on which choices they make~\citep{silberzahn2018many,steegen2016increasing}. A reader who is told that ``most specifications give a negative estimate'' cannot tell whether the conclusion rests on genuine insensitivity or merely on an unexplored corner of the specification space. Knowing how many changes are needed to reach non-significance gives an auditable answer to that question, one relevant to policy decisions on causal findings. (By \emph{auditable}, we mean that each axis is prespecified, defensible, and interpretable as one analyst decision.)

We introduce minimum specification perturbation (MSP), the smallest number of specification changes needed for the CI for the effect to contain zero (a Hamming-distance notion: the number of axis flips needed to reach a specification whose CI contains zero). Small MSP means that a conclusion can be overturned by a few changes. Large MSP means that many changes are required.

MSP complements three existing families of robustness tools. Specification curve and multiverse analyses~\citep{simonsohn2020specification,steegen2016increasing,silberzahn2018many} summarize how dispersed estimates are across choices, but not how close the baseline is to being overturned. Sensitivity analyses such as the E-value~\citep{vanderweele2017sensitivity,cinelli2020making} ask how strong hidden bias must be to explain away a result, varying a bias model rather than the analyst's reported choices. Data-fragility measures~\citep{broderick2020automatic,masten2020inference} ask how few observations to remove to change a conclusion, perturbing data rather than decisions. To our knowledge, they do not report the minimum number of coordinated analyst-decision changes needed to reach non-significance in a declared specification space. Section~\ref{sec:related} discusses each family in detail.

\Cref{fig:lalonde-curve} illustrates the distinction. In the LaLonde NSW--CPS example, most estimates are negative, so a dispersion summary suggests robustness. But changing just one analyst decision is enough to make the CI contain zero, so MSP equals 1.

\paragraph{Toy example.}
To fix ideas, consider a study with $K=2$ analyst decisions: covariate set (basic~$=0$, full~$=1$) and estimator (OLS~$=0$, IPW~$=1$). The specification space is $S=\{0,1\}^2$ and the baseline is $s_0=(0,0)$. All four specifications are admissible (included in the declared specification space \(S\)), and each CI is constructed at the same 95\% level. Hypothetical results are:

\begin{center}
\begin{tabular}{ccccc}
\toprule
$s$ & Covariates & Estimator & 95\% CI & $0\in\mathrm{CI}(s)$ \\
\midrule
$(0,0)$ & basic & OLS & $[-600,\,-100]$ & No \\
$(1,0)$ & full  & OLS & $[-200,\,+40\,]$ & \textbf{Yes} \\
$(0,1)$ & basic & IPW & $[-500,\,-60\,]$ & No \\
$(1,1)$ & full  & IPW & $[-180,\,+80\,]$ & Yes \\
\bottomrule
\end{tabular}
\end{center}

The feasible set is $\mathcal{N}(S)=\{(1,0),(1,1)\}$. Since $s_0=(0,0)$, the Hamming distance to the baseline equals $\|s\|_0$. Therefore,
\[
  \mathrm{MSP}(S) = \min_{s\in\mathcal{N}(S)}\|s\|_0 = \min\{1,2\} = 1,
\]
A single decision, switching to full covariates while keeping OLS, yields a specification whose CI no longer excludes zero. The conclusion is fragile within this space. Had all one-flip neighbours also excluded zero, MSP would equal~2, meaning two coordinated changes are required; the conclusion would be more robust within the same space.

Our contributions are fourfold.
\begin{enumerate}

    \item We define MSP and establish properties: refinement monotonicity (\Cref{prop:refinement}); sharp-null permutation calibration for complete randomization (\Cref{prop:perm-calib}) using fixed bootstrap resamples across permutations; and a weighted extension with a sandwich bound linking weighted and unweighted MSP (\Cref{prop:weighted-sandwich}).

    \item We characterize computation. Under a bounded-step additive specification model with constant CI width, MSP admits a greedy      \(O(K\log K)\) algorithm (\Cref{thm:additive-char}). We also give variable-width tractable extensions based on a verifiable greedy solution and a sufficient condition for automatic feasibility (\Cref{prop:vw-greedy,prop:vw-auto-feas}). Beyond these tractable classes, we provide an exact branch-and-bound procedure for auditable spaces, and we show that without bounded-step structure, computation is NP-hard in the additive class (\Cref{thm:hardness}).

    \item We apply MSP to two LaLonde settings \citep{imbens2024lalonde,dehejia1999causal}. On the observational NSW--CPS sample, MSP$=1$: one analyst decision moves the CI to include zero, a fragile conclusion. On the randomized NSW sample, MSP$=\infty$ (no specification nullifies the result) and lies in the top 1.5\% of the sharp-null permutation distribution ($\hat{p}\approx0.015$), validating calibration. Against the Fragility Index, $\widetilde{\mathrm{FI}}_{\mathrm{adv}}=3$ (significance vanishes after zeroing three outcomes) while MSP$=\infty$: a result can be \emph{data-fragile and specification-robust simultaneously}.

    \item We study MSP in simulation. MSP is small under the null and grows with effect strength when falsification is possible. In a classification task (effect present vs.\ absent), MSP achieves AUC$=0.941$ versus AUC$=0.236$ for share-significant-any, which is worse than chance, making weak configurations appear significant. This contrast is sharpest at \(\tau=0.3\): requiring \(\mathrm{MSP}\ge2\) before trusting a claim gives false-positive rate 0.010, whereas comparable dispersion-based thresholds exceed 0.86, making them unreliable.

\end{enumerate}

\begin{figure}[t]
    \centering
    \includegraphics[width=0.92\linewidth]{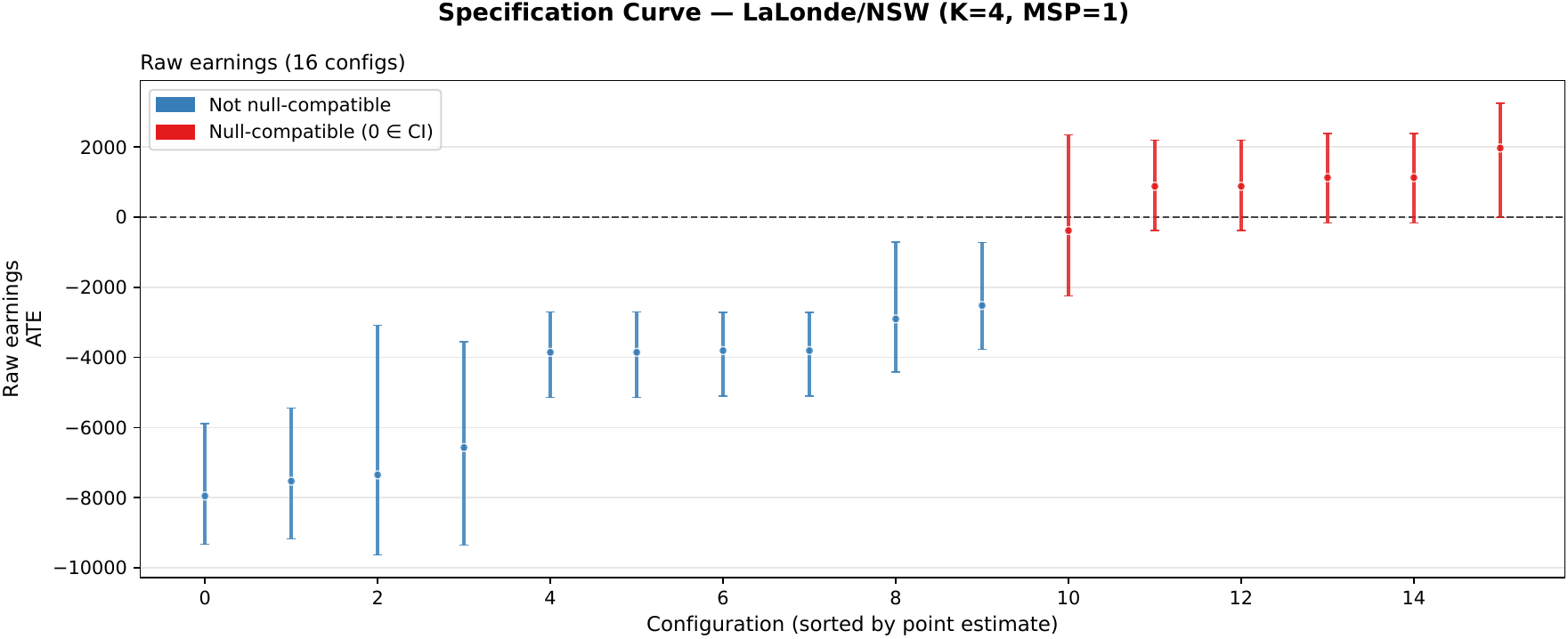}
    \caption{LaLonde illustration. Fixing the outcome scale to raw earnings, the remaining \(K=4\) binary axes, estimator (OLS/IPW), covariates (basic/full), functional form (linear/nonlinear), and trimming (off/on), yield \(2^4=16\) configurations, sorted here by point estimate. Red intervals contain zero. The baseline specification (leftmost significant negative point) is at Hamming distance 1 (one axis flip) from a specification whose CI contains zero, so MSP equals 1.}
    \label{fig:lalonde-curve}
\end{figure}

\section{Related Work}
\label{sec:related}

MSP is related to three literatures: tools that enumerate estimates across specifications, methods that assess sensitivity to unmeasured confounding, and measures of minimum-change robustness. MSP differs from each: it measures the minimum number of documented analyst decisions needed to reach a specification whose CI contains zero, a question none of the three literatures addresses.

\textbf{Specification enumeration.} Analytic flexibility can materially affect published conclusions~\citep{simmons2011false,gelman2013garden}. Multiverse analysis and specification curve analysis summarize how estimates vary across defensible specifications~\citep{steegen2016increasing,simonsohn2020specification}, while related diagnostics measure dispersion or instability across model choices~\citep{athey2015measure,patel2015assessment}. Triangulation emphasizes agreement across approaches with distinct biases to strengthen causal claims \citep{hammerton2021causal}. MSP asks a different question: not how dispersed the specification set is, but how close it lies to \(\mathcal{N}(S)\) in Hamming distance.

\textbf{Sensitivity analysis and hidden bias.} A second robustness tradition studies how strong unmeasured confounding must be to overturn a result~\citep{rosenbaum2002observational,imbens2003sensitivity,vanderweele2017sensitivity,oster2019unobservable,cinelli2020making,rambachan2023more}. These methods complement MSP because they vary a bias model rather than auditable analyst choices. MSP conditions on the declared specification space rather than on a parameterized hidden-bias model.

\textbf{Minimum-perturbation robustness.} Data-fragility measures such as the Fragility Index and influence-based diagnostics ask how few outcome or observation changes overturn a conclusion~\citep{hampel1971general,walsh2014statistical,broderick2020automatic}. In machine learning, adversarial robustness studies the minimum perturbation needed to change a prediction~\citep{goodfellow2014explaining,madry2017towards,carlini2017towards}. Certified robustness radii under $\ell_0$ constraints~\citep{lee2019tight} and discrete perturbations in NLP~\citep{ebrahimi2018hotflip} provide analogous ``how far to failure'' measures over combinatorial input spaces. The counterfactual literature~\citep{wachter2017counterfactual,ustun2019actionable,karimi2021algorithmic,guidotti2024counterfactual,pawelczyk2022exploring} similarly seeks minimum-change interventions to alter an outcome. MSP adapts this minimum-change logic to specification space: the perturbation acts on analyst decisions, and failure means reaching a CI that contains zero.

These connections position MSP as an auditable distance-to-\(\mathcal{N}(S)\) summary over analyst choices, complementing dispersion-based robustness summaries rather than replacing them.

\section{Methodology}

\subsection{Framework}
\subsubsection{Object and interpretation}
Let \(S \subseteq \{0,1\}^K\) be a specification space. Each coordinate is a binary axis, and \(s_k = 1\) means that axis \(k\) is perturbed relative to a baseline \(s = \mathbf{0}\). Let \(\hat{\tau}(s)\) and \(\mathrm{CI}(s)\) denote the estimated effect and CI. Define the set of specifications whose CIs contain zero:
\[
\mathcal{N}(S) = \{s \in S : 0 \in \mathrm{CI}(s)\}.
\]
The MSP of \((S,\mathcal{N}(S))\) is
\[
\mathrm{MSP}(S) =
\begin{cases}
\min \{\lVert s \rVert_0 : s \in \mathcal{N}(S)\}, & \mathcal{N}(S) \neq \emptyset, \\
\infty, & \mathcal{N}(S) = \emptyset.
\end{cases}
\]
The \(\infty\) case matters. It says no specification in this space falsifies the conclusion. MSP is conditional on the specification space. It measures sensitivity \emph{within that space}; it does not certify validity of any specification.
See Appendix~B for guidance on constructing specification spaces.

For theory it is useful to separate the combinatorial object. Let \(F \subseteq S\) be any feasible set. Then
\[
\mathrm{MSP}(S,F) =
\begin{cases}
\min \{\lVert s \rVert_0 : s \in F\}, & F \neq \emptyset, \\
\infty, & F = \emptyset.
\end{cases}
\]
All theoretical results below are stated for \((S,F)\). The causal version sets \(F = \mathcal{N}(S)\).

\paragraph{Interpretation.}
MSP is an \(\ell_0\) or Hamming-distance notion of falsification complexity. It asks for the smallest number of analyst moves needed to reach a specification whose CI contains zero. MSP summarizes how close a result is to being overturned in a specification space.

\subsection{Structural Properties}
This section separates two questions. First, what combinatorial structure does MSP satisfy on a specification space? Second, what calibration holds when MSP is computed from randomized data?

\subsubsection{Combinatorial structure}
\paragraph{Definition (feasibility-preserving refinement).}
Let \(S=\{0,1\}^K\) be a coarse specification space with feasible set \(F\subseteq S\), and let \(S'=\{0,1\}^{K'}\) be a refined space with feasible set \(F'\subseteq S'\). A refinement is \emph{feasibility-preserving} if there exists an embedding \(\iota:S\to S'\) such that
\[
\|\iota(s)\|_0=\|s\|_0 \quad \text{for all } s\in S,
\qquad
\iota(F)\subseteq F'.
\]
Thus a coarse feasible configuration remains feasible in the refined space, preserving Hamming cost.

\paragraph{Example.}
A coarse axis ``nonlinear'' may be refined into two axes ``quadratic terms'' and ``interactions.'' A canonical embedding sends
\[
\iota(\text{linear})=(0,0), \qquad \iota(\text{nonlinear})=(1,0).
\]
If feasible configurations remain feasible after replacing the coarse axis by its embedded refined representative, then the refinement is feasibility-preserving. The additional refined option \((1,1)\) enlarges the space without increasing MSP, because all coarse feasible configurations remain available.

\begin{proposition}[Refinement monotonicity]
\label{prop:refinement}
Let \((S',F')\) be a feasibility-preserving refinement of \((S,F)\):
\[
\mathrm{MSP}(S',F') \leq \mathrm{MSP}(S,F).
\]
\end{proposition}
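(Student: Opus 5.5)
The plan is a direct argument from the definition of $\mathrm{MSP}(S,F)$, splitting on whether the coarse feasible set is empty.

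\emph{Case $F=\emptyset$.} Here $\mathrm{MSP}(S,F)=\infty$. Since $\mathrm{MSP}(S',F')$ takes values in $\{0,1,\dots,K'\}\cup\{\infty\}$, the inequality $\mathrm{MSP}(S',F')\le\infty$ holds trivially. No property of the embedding is needed.

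\emph{Case $F\neq\emptyset$.} Because $S$ is finite, the minimum in the definition is attained. Choose $s^\star\in F$ with $\|s^\star\|_0=\mathrm{MSP}(S,F)$. Feasibility preservation gives $\iota(F)\subseteq F'$, so $\iota(s^\star)\in F'$. In particular $F'\neq\emptyset$, and $\mathrm{MSP}(S',F')$ is the finite minimum of $\|s'\|_0$ over $F'$. Taking $s'=\iota(s^\star)$ as a candidate gives
\[
\mathrm{MSP}(S',F')\le\|\iota(s^\star)\|_0=\|s^\star\|_0=\mathrm{MSP}(S,F).
\]
The middle equality is the Hamming-cost preservation $\|\iota(s)\|_0=\|s\|_0$ from the definition of a feasibility-preserving refinement.

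There is no real obstacle; the only care needed is the bookkeeping of the $\infty$ convention. I must not compare a minimum over an empty set, and I must note that nonemptiness of $F$ forces nonemptiness of $F'$, so the refined MSP is given by the finite branch of its definition. It is also worth remarking that $\iota$ need not be injective or surjective. The extra refined configurations, such as $(1,1)$ in the example, can only add candidates to the minimization, never remove them, which is why the inequality can be strict.
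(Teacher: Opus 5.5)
Your proof is correct and follows the paper's argument essentially line for line: you dispose of $F=\emptyset$ via the $\infty$ convention, then push a minimizer $s^\star$ through $\iota$, using $\iota(F)\subseteq F'$ and Hamming-weight preservation to get $\mathrm{MSP}(S',F')\le\|\iota(s^\star)\|_0=\|s^\star\|_0=\mathrm{MSP}(S,F)$. Your extra bookkeeping (the minimum is attained because $S$ is finite, and $F'\neq\emptyset$ so the refined MSP is finite) makes explicit what the paper leaves implicit.
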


A coarse grid yields a conservative upper bound: under a feasibility-preserving refinement, adding a new orthogonal axis can only weakly lower MSP. The embedding must preserve Hamming cost; changes that alter the estimand or target population are not feasibility-preserving refinements.

\subsubsection{Statistical calibration}
\begin{proposition}[Permutation calibration]
\label{prop:perm-calib}
Under a sharp null \(H_0:Y_i(1)=Y_i(0)\) for all \(i\), complete randomization with a fixed number of treated units, and fixed bootstrap resamples \(U\), the add-one Monte Carlo permutation p-value based on MSP satisfies
\[
\Pr(\hat p \le \alpha \mid H_0) \le \alpha \qquad \forall \alpha \in [0,1].
\]
Here \(\hat p=(1+\#\{j\in\{1,\dots,P\}:V_j\ge V_0\})/(P+1)\), where \(V_0\) is the observed MSP and \(V_1,\dots,V_P\) are the MSP values computed on \(P\) permutations of the treatment assignment vector.
\end{proposition}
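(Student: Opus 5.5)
The plan is the standard exchangeability argument for Monte Carlo permutation tests. I will show that, under the stated conditions, the vector $(V_0,V_1,\dots,V_P)$ is exchangeable, and then apply the usual rank argument to the add-one p-value.

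\emph{Step 1: exchangeability of assignments.} Let $Z_0$ be the observed assignment vector and $Z_1,\dots,Z_P$ the permuted ones. I take the $Z_j$ to be drawn i.i.d.\ uniformly from the set of assignment vectors with the fixed number of treated units, for example by applying uniformly random permutations to $Z_0$. Complete randomization makes $Z_0$ uniform on that same set. Hence $Z_0,Z_1,\dots,Z_P$ are i.i.d., and in particular exchangeable.

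\emph{Step 2: MSP is a fixed function of the assignment under $H_0$.} Under the sharp null, the observed outcome vector $Y=Y(0)=Y(1)$ does not depend on the assignment. Covariates, the specification space $S$, the CI procedure at each $s\in S$, and the bootstrap resamples $U$ are held fixed across permutations. Therefore $V_j=g(Z_j)$ for a single deterministic map $g$: compute $\hat\tau(s)$ and $\mathrm{CI}(s)$ for all $s\in S$, form $\mathcal N(S)$, and take the minimum Hamming weight, with value in $\{0,1,\dots,K,\infty\}$.

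This is the step I expect to be the main obstacle. The point of requiring fixed $U$ is that the bootstrap indices are reused for every permutation rather than redrawn. Without this, $V_0$ and $V_j$ would not be the same function of their assignments, and exchangeability could fail. If $U$ is random, I would condition on it and argue that $U$ is independent of $(Z_0,\dots,Z_P)$. I also need $g$ to be well defined for every admissible assignment, including trimmed or IPW specifications. If some assignment yields degenerate estimates, $g$ must still output a deterministic value, say by declaring a convention for such cases. It follows that $(V_0,\dots,V_P)$ is exchangeable given $U$ and the potential outcomes.

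\emph{Step 3: rank argument with ties.} Let $R=\#\{j\in\{0,\dots,P\}:V_j\ge V_0\}$, so that $\hat p=R/(P+1)$; the term $j=0$ supplies the ``$+1$''. By exchangeability, the distribution of $R$ equals that of $R_i=\#\{j:V_j\ge V_i\}$ for each $i$. For $k\in\{1,\dots,P+1\}$, every index $i$ with $R_i\le k$ has $V_i$ among the top $k$ values counted with ties, so $\sum_i \mathbf 1\{R_i\le k\}\le k$. Averaging over $i$ gives
\[
\Pr(R\le k)=\tfrac{1}{P+1}\sum_{i=0}^{P}\Pr(R_i\le k)\le \tfrac{k}{P+1}.
\]
Setting $k=\lfloor \alpha(P+1)\rfloor$ then yields $\Pr(\hat p\le\alpha\mid H_0)\le\alpha$ for all $\alpha\in[0,1]$. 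Ties are handled because the counting inequality remains valid when ties are present, which matters since MSP is integer-valued and includes $\infty$. Finally, integrating over $U$ removes the conditioning.
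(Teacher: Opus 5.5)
Your proposal is correct and follows the paper's route. Fixing $U$ makes MSP a deterministic function of the assignment under the sharp null, complete randomization with i.i.d.\ uniform permutations gives exchangeability of $(V_0,\dots,V_P)$, and the add-one rank argument finishes the proof. The differences are cosmetic: you get i.i.d.\ assignments directly instead of through the paper's auxiliary permutation $G_0$, and you prove the tie-robust bound $\sum_i \mathbf 1\{R_i\le k\}\le k$ yourself where the paper cites the standard add-one result \citep{phipson2016permutation}.
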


Validity follows from the add-one argument \citep{phipson2016permutation}: we use a bootstrap and draw the bootstrap index matrix once, holding it fixed across permutations, so the statistic is deterministic in the assignment vector; under the sharp null and complete randomization, permuted statistics are exchangeable. In observational data, exact calibration fails; the permutation distribution is only a diagnostic reference.

\subsubsection{Combinatorial extension: weighting}
The next result returns to the geometry of the object. It shows how a weighted version of MSP compares with the unweighted Hamming-distance summary.

\begin{proposition}[Weighted MSP sandwich]
\label{prop:weighted-sandwich}
Let \(w_k > 0\) and define
\[
\mathrm{wMSP}(S,F,w) = \min_{s \in F} \sum_{k=1}^K s_k w_k.
\]
Write \(w_{\min}=\min_k w_k\) and \(w_{\max}=\max_k w_k\).
Then
\[
w_{\min}\,\mathrm{MSP}(S,F)
\le
\mathrm{wMSP}(S,F,w)
\le
w_{\max}\,\mathrm{MSP}(S,F).
\]
\end{proposition}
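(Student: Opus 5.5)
The plan is to prove the two inequalities separately by a pointwise comparison of costs over $F$, and then pass to minima. First I would dispose of the degenerate case $F=\emptyset$. Here $\mathrm{MSP}(S,F)=\infty$, and I adopt the same convention for wMSP: a minimum over the empty set equals $\infty$. Since $w_{\min},w_{\max}>0$, both outer terms are then $\infty$, and the chain reads $\infty\le\infty\le\infty$. So assume $F\neq\emptyset$. Because $S\subseteq\{0,1\}^K$ is finite, both minima are attained.

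The key pointwise observation is the following. For any $s\in\{0,1\}^K$, the weighted cost is $\sum_k s_k w_k=\sum_{k:\,s_k=1} w_k$, a sum of exactly $\|s\|_0$ terms. Each term lies in $[w_{\min},w_{\max}]$, which gives
\[
w_{\min}\|s\|_0\;\le\;\sum_{k=1}^K s_k w_k\;\le\;w_{\max}\|s\|_0 \qquad\text{for all } s\in F.
\]

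For the lower bound, take $s^\ast\in F$ attaining wMSP. Then
\[
\mathrm{wMSP}=\sum_k s^\ast_k w_k\ge w_{\min}\|s^\ast\|_0\ge w_{\min}\,\mathrm{MSP}(S,F),
\]
where the last step uses $s^\ast\in F$ and $w_{\min}>0$. For the upper bound, take $\bar s\in F$ attaining MSP. Then
\[
\mathrm{wMSP}\le\sum_k \bar s_k w_k\le w_{\max}\|\bar s\|_0=w_{\max}\,\mathrm{MSP}(S,F).
\]

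There is no genuine obstacle. The only points needing care are that the two bounds use different minimizers, and that the $\infty$ convention, together with strict positivity of the weights, keeps the empty case consistent.
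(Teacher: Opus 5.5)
Your proof is correct and follows the paper's argument: you dispose of $F=\emptyset$ via the $\infty$ convention, then bound $\mathrm{wMSP}$ above using an MSP-minimizer and below using a wMSP-minimizer, via the pointwise estimate $w_{\min}\|s\|_0\le\sum_k s_k w_k\le w_{\max}\|s\|_0$. Your explicit note that strict positivity of the weights gives $w_{\min}\cdot\infty=w_{\max}\cdot\infty=\infty$ in the empty case is a small point of care that the paper leaves implicit.
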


This bound keeps the main metric simple. If axes are comparable, unweighted MSP is informative. If axes are heterogeneous, the ratio \(w_{\max}/w_{\min}\) indicates the discrepancy between weighted and unweighted MSP; as user-chosen weights introduce a researcher degree of freedom, they should be preregistered or reported alongside unweighted MSP (Appendix~B).

\subsection{Characterization}

\begin{theorem}[Additive characterization]
\label{thm:additive-char}
Suppose \(S=\{0,1\}^K\) and
\[
\hat{\tau}(s) = \tau_0 + \sum_{k=1}^K s_k \delta_k.
\]
Assume a symmetric CI with constant half-width \(c>0\), \(\tau_0>c\), and the bounded individual effect condition \(\max_k |\delta_k| \le 2c\). Let \(\mathcal K^- = \{k : \tau_0\delta_k<0\}\) be the set of opposing axes and write \(K^-=|\mathcal K^-|\). Order the opposing axes so that
\[
|\delta_{\pi(1)}| \ge \cdots \ge |\delta_{\pi(K^-)}|.
\]
Then
\[
\mathrm{MSP}
=
\min \left\{m \in \{1,\ldots,K^-\} : \sum_{j=1}^m |\delta_{\pi(j)}| \ge \tau_0 - c \right\},
\]
if such \(m\) exists, and \(\mathrm{MSP}=\infty\) otherwise.\footnote{The case \(\tau_0<-c\) follows by sign symmetry: apply the result to \((-\tau_0,-\delta_k)\). The same applies to \Cref{prop:vw-greedy,prop:vw-auto-feas}.}
\end{theorem}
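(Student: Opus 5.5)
The plan is to reduce the problem to a one-dimensional statement about partial sums. I will work with \(\tau_0>c>0\); the footnote handles the other sign. The CI is \([\hat\tau(s)-c,\hat\tau(s)+c]\), so \(s\in F=\mathcal N(S)\) if and only if \(|\hat\tau(s)|\le c\), that is, \(-c\le \hat\tau(s)\le c\). The baseline has \(\hat\tau(\mathbf 0)=\tau_0>c\), so it is infeasible and \(\mathrm{MSP}\ge 1\). Write \(m^\star\) for the minimum in the statement, with \(m^\star=\infty\) if no such \(m\) exists, and let \(\Sigma_m=\sum_{j=1}^m|\delta_{\pi(j)}|\). I will prove two inequalities: \(\mathrm{MSP}\ge m^\star\) (a lower bound, which also gives the \(\infty\) case) and \(\mathrm{MSP}\le m^\star\) (an explicit greedy witness).

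\emph{Lower bound.} Take any feasible \(s\) with \(\|s\|_0=m\). Split its support into opposing axes (\(\delta_k<0\)) and the rest (\(\delta_k\ge 0\)). Feasibility gives \(\hat\tau(s)\le c\), hence
\[
\sum_{k\in\mathcal K^-,\,s_k=1}|\delta_k| \;\ge\; \tau_0-c+\sum_{k\notin\mathcal K^-,\,s_k=1}\delta_k \;\ge\; \tau_0-c .
\]
The left-hand side sums at most \(\min(m,K^-)\) opposing magnitudes, so it is at most \(\Sigma_{\min(m,K^-)}\). Therefore some \(m'\le m\) with \(m'\le K^-\) satisfies \(\Sigma_{m'}\ge\tau_0-c\), which gives \(m\ge m^\star\). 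If no \(m\) satisfies the condition, then no feasible \(s\) exists at all, and \(\mathrm{MSP}=\infty\). This direction uses only the upper end of the CI condition.

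\emph{Upper bound (main step).} Assume \(m^\star<\infty\) and let \(s^\star\) flip exactly \(\pi(1),\dots,\pi(m^\star)\), so \(\hat\tau(s^\star)=\tau_0-\Sigma_{m^\star}\). By the definition of \(m^\star\), \(\Sigma_{m^\star}\ge\tau_0-c\), so \(\hat\tau(s^\star)\le c\). The obstacle is overshoot: we must also show \(\hat\tau(s^\star)\ge -c\), and this is exactly where the bounded-step assumption \(\max_k|\delta_k|\le 2c\) is needed. By minimality of \(m^\star\), \(\Sigma_{m^\star-1}<\tau_0-c\); this also holds when \(m^\star=1\), since \(\Sigma_0=0<\tau_0-c\). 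Hence
\[
\hat\tau(s^\star)=\tau_0-\Sigma_{m^\star-1}-|\delta_{\pi(m^\star)}|> c-2c=-c .
\]
So \(s^\star\in F\) with \(\|s^\star\|_0=m^\star\), which gives \(\mathrm{MSP}\le m^\star\).

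Combining the two bounds gives the formula. Sorting the opposing axes is the only nontrivial computation, which yields the \(O(K\log K)\) greedy algorithm. Feasibility need not be monotone in \(m\), because large opposing sums can overshoot below \(-c\). The argument does not require monotonicity, since it only certifies the first crossing \(m^\star\) and the lower bound rules out anything smaller.
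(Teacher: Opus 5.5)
Your proof is correct and follows essentially the same route as the paper. The lower bound combines the upper-end inequality \(\hat\tau(s)\le c\), the fact that non-opposing axes cannot help it, and the maximality of sorted prefix sums among subsets of a given size; the upper bound shows the greedy prefix is feasible via \(\Sigma_{m^\star-1}<\tau_0-c\) and the step bound \(|\delta_{\pi(m^\star)}|\le 2c\). Your direct bound on the opposing part of an arbitrary feasible \(s\) is, if anything, a slightly cleaner packaging of the paper's ``restrict to \(\mathcal K^-\)'' step, since it never needs the restricted configuration itself to be feasible.
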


The proof is a greedy exchange argument. Under additivity, the smallest falsifying set is the smallest prefix of the largest opposing shifts whose attenuation reaches estimates containing zero. Therefore:
\begin{corollary}
Under the model of \Cref{thm:additive-char}, MSP is computable in \(O(K \log K)\) time.
\end{corollary}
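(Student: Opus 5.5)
The plan is to turn \Cref{thm:additive-char} directly into an algorithm and count its cost. The theorem says that, in the additive model with constant half-width $c$, $\tau_0>c$ and $\max_k|\delta_k|\le 2c$, MSP equals the length of the shortest prefix of the opposing shifts, sorted by decreasing magnitude, whose cumulative attenuation reaches $\tau_0-c$. It is $\infty$ if no such prefix exists. So MSP is determined by the sorted opposing shifts alone, and we never search over the $2^K$ specifications. I will first reduce the case $\tau_0<-c$ to $\tau_0>c$ by the sign flip in the footnote, replacing $(\tau_0,\delta_k)$ with $(-\tau_0,-\delta_k)$; this costs $O(K)$.

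The algorithm has three steps. First, in one $O(K)$ pass over $k=1,\dots,K$, collect the opposing set $\mathcal K^-=\{k:\tau_0\delta_k<0\}$ together with the magnitudes $|\delta_k|$. Second, sort these $K^-\le K$ magnitudes in nonincreasing order to obtain the permutation $\pi$; a comparison sort takes $O(K^-\log K^-)\subseteq O(K\log K)$. Third, scan $m=1,\dots,K^-$ while keeping a running prefix sum $\Sigma_m=\Sigma_{m-1}+|\delta_{\pi(m)}|$. Return the first $m$ with $\Sigma_m\ge\tau_0-c$, or return $\infty$ if the scan ends without reaching the threshold. The scan takes $O(K)$ time with $O(1)$ arithmetic per step.

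Correctness is exactly the displayed characterization in \Cref{thm:additive-char}, which we may assume. The returned $m$ is the minimum in that display because the prefix sums are nondecreasing, so the first index that crosses the threshold is the smallest one. The total cost is $O(K)+O(K\log K)+O(K)=O(K\log K)$ under the usual unit-cost arithmetic model.

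There is no substantive obstacle. The only points to check are that the sign reduction and the boundary cases ($K^-=0$, or a threshold that is never reached, both giving $\infty$) are handled within the same bound, and that $O(K\log K)$ counts arithmetic operations on the inputs, not bit complexity.
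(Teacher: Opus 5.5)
Your proposal is correct and takes the same route as the paper, which reads the corollary directly off the greedy-prefix characterization in \Cref{thm:additive-char}: an $O(K)$ pass to extract the opposing axes, an $O(K\log K)$ sort of their magnitudes, and an $O(K)$ prefix-sum scan for the first index reaching $\tau_0-c$. You also spell out details the paper leaves implicit: the sign reduction, the $K^-=0$ and never-reached cases both returning $\infty$, and the unit-cost arithmetic convention.
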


\paragraph{Remark (variable confidence widths).}
\Cref{thm:additive-char} assumes constant CI half-width. When widths vary, each axis contributes both a shift \(\delta_k\) and width adjustment \(\Delta c_k\), imposing two null-band inequalities; the greedy algorithm no longer applies. The next results identify regimes where it does.

\begin{proposition}[Variable-width greedy exactness]
\label{prop:vw-greedy}
Suppose \(S=\{0,1\}^K\),
\[
\hat\tau(s)=\tau_0+\sum_{k=1}^K s_k\delta_k,
\qquad
c(s)=c_0+\sum_{k=1}^K s_k\Delta c_k,
\]
with \(\tau_0>c_0>0\) and \(c(s)\ge 0\) for all \(s\in S\). Define the effective shifts
\[
e_k=\Delta c_k-\delta_k,
\]
let \(\mathcal K^+=\{k:e_k>0\}\) with \(K^+=|\mathcal K^+|\), and order these axes so that
\[
e_{\pi(1)}\ge \cdots \ge e_{\pi(K^+)} > 0.
\]
If ties occur, fix any deterministic tie-breaking rule (implementations should evaluate all tied prefixes, since equal \(e_k\) may have differing \(\Delta c_k\)).
For \(r=1,\dots,K^+\), let
\[
E_r=\sum_{j=1}^r e_{\pi(j)},
\]
and let
\[
m=\min\{r\in\{1,\dots,K^+\}:E_r\ge \tau_0-c_0\},
\]
if such \(m\) exists. Let \(s_g\) activate exactly the axes
\(\{\pi(1),\dots,\pi(m)\}\). If \(s_g\) satisfies the lower-bound condition
\[
\hat\tau(s_g)\ge -c(s_g),
\]
then \(s_g\) has a CI that contains zero and
\[
\|s_g\|_0=\mathrm{MSP}.
\]
If no such \(m\) exists, then \(\mathrm{MSP}=\infty\).
\end{proposition}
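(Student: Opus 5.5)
We reduce zero-containment to two linear inequalities and then argue that the greedy prefix is optimal for the harder one. For $s\in S$, $0\in\mathrm{CI}(s)=[\hat\tau(s)-c(s),\,\hat\tau(s)+c(s)]$ holds if and only if
\[
\hat\tau(s)-c(s)\le 0 \quad\text{(upper condition)} \qquad\text{and}\qquad \hat\tau(s)+c(s)\ge 0 \quad\text{(lower condition)}.
\]
Under the additive model, the upper condition reads
\[
\tau_0-c_0+\sum_k s_k(\delta_k-\Delta c_k)\le 0,
\qquad\text{that is,}\qquad
\sum_k s_k e_k\ \ge\ \tau_0-c_0>0 .
\]
So every $s\in\mathcal N(S)$ satisfies the single knapsack-type covering inequality $\sum_k s_k e_k\ge \tau_0-c_0$. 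The plan is to show the greedy prefix minimizes $\|s\|_0$ over this relaxation. Once the lower condition is verified at $s_g$, the relaxation bound becomes tight.

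\textbf{Step 1 (lower bound).} Take any $s$ with $\sum_k s_k e_k\ge\tau_0-c_0$ and let $r=\|s\|_0$. Axes with $e_k\le 0$ contribute nonpositively, so $\sum_k s_k e_k\le \sum_{k\in \mathrm{supp}(s)\cap\mathcal K^+}e_k$. A set of at most $r$ elements of $\mathcal K^+$ has $e$-sum at most that of the $\min(r,K^+)$ largest, namely $E_{\min(r,K^+)}$; this is the standard exchange argument, since replacing any chosen axis by a larger unchosen one cannot decrease the sum. Hence $E_{\min(r,K^+)}\ge\tau_0-c_0$. Because the $E_r$ are nondecreasing in $r$ (all $e_{\pi(j)}>0$), minimality of $m$ gives $r\ge m$. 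Thus every $s\in\mathcal N(S)$ has $\|s\|_0\ge m$, so $\mathrm{MSP}\ge m$. The same argument shows that if no $m$ exists, then even $E_{K^+}<\tau_0-c_0$. In that case no $s$ satisfies the upper condition, $\mathcal N(S)=\emptyset$, and $\mathrm{MSP}=\infty$.

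\textbf{Step 2 (attainment).} By construction $\sum_k (s_g)_k e_k=E_m\ge \tau_0-c_0$, so $s_g$ satisfies the upper condition. By hypothesis it satisfies the lower condition $\hat\tau(s_g)\ge -c(s_g)$. Hence $0\in\mathrm{CI}(s_g)$, so $s_g\in\mathcal N(S)$ and $\mathrm{MSP}\le\|s_g\|_0=m$. Combining the two steps gives $\|s_g\|_0=\mathrm{MSP}$.

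\textbf{Ties.} Tie-breaking does not affect $E_r$, since equal $e$-values give equal prefix sums, so $m$ is well defined. The parenthetical advice about evaluating all tied prefixes only concerns whether the lower-bound check succeeds; the statement itself is conditional on that check. The assumption $c(s)\ge 0$ merely ensures the CI is a genuine interval.

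The main subtlety, rather than an obstacle, is recognizing that the lower condition can only shrink the feasible set. Dropping it therefore yields a valid lower bound on MSP, and the hypothesis on $s_g$ closes the gap. The exchange argument in Step 1 is routine.
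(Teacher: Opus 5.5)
Your proof is correct and follows essentially the same route as the paper: you rewrite the upper half of zero-containment as the covering inequality $\sum_k s_k e_k\ge\tau_0-c_0$, use that greedy prefixes maximize the $e$-sum at each cardinality (axes with $e_k\le 0$ cannot help) to get either $\mathrm{MSP}\ge m$ or infeasibility, and then close the gap with the assumed lower-bound check at $s_g$. Your direct bound via $E_{\min(r,K^+)}$ is slightly cleaner than the paper's phrasing that the search ``may be restricted to subsets of $\mathcal K^+$'', because your version never suggests dropping axes that could matter for the lower condition.
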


The proof applies the greedy prefix argument to effective shifts \(e_k\); the upper-bound constraint reduces to \(\sum_k s_k e_k\ge\tau_0-c_0\), so the first greedy prefix is the candidate. Exactness holds when that prefix also passes the lower-bound check; if \(s_g\) exists but fails it, one falls back to branch-and-bound (\Cref{sec:computation}). The next result gives a sufficient condition eliminating the post hoc check.

\begin{proposition}[A structural sufficient condition for automatic feasibility]
\label{prop:vw-auto-feas}
Under the setup of \Cref{prop:vw-greedy}, additionally assume
\[
\Delta c_k\ge 0 \quad \forall k\in\mathcal K^+,
\qquad
\max_{k\in\mathcal K^+} e_k \le 2c_0.
\]
Then the greedy prefix \(s_g\) from \Cref{prop:vw-greedy} automatically satisfies
\[
\hat\tau(s_g)\ge -c(s_g).
\]
Consequently, whenever \(m\) exists, \(s_g\) has a CI that contains zero and
\[
\|s_g\|_0=\mathrm{MSP};
\]
if no such \(m\) exists, then \(\mathrm{MSP}=\infty\).
\end{proposition}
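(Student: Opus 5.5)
Since the statement assumes the setup of \Cref{prop:vw-greedy}, we only need to show that the greedy prefix \(s_g\), when \(m\) exists, satisfies the lower-bound inequality \(\hat\tau(s_g)\ge -c(s_g)\). Once this holds, the conclusion \(\|s_g\|_0=\mathrm{MSP}\) follows from \Cref{prop:vw-greedy}. The case where no \(m\) exists is already handled there: then \(\mathrm{MSP}=\infty\).

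\textbf{Reformulation.} The lower-bound condition is \(\tau_0+\sum_k s_k\delta_k \ge -c_0-\sum_k s_k\Delta c_k\). Substituting \(\delta_k=\Delta c_k-e_k\), it becomes
\[
\tau_0+c_0+\sum_{j=1}^m\bigl(2\Delta c_{\pi(j)}-e_{\pi(j)}\bigr)\ge 0 .
\]
Because \(\Delta c_k\ge 0\) on \(\mathcal K^+\), it suffices to prove \(E_m\le \tau_0+c_0\).

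\textbf{Key step: overshoot control.} We use the minimality of \(m\).
\begin{itemize}[nosep]
\item If \(m=1\), then \(E_1=e_{\pi(1)}\le 2c_0<\tau_0+c_0\), using \(\tau_0>c_0\).
\item If \(m\ge 2\), then \(E_{m-1}<\tau_0-c_0\) by minimality, and \(e_{\pi(m)}\le\max_{k\in\mathcal K^+}e_k\le 2c_0\). Hence \(E_m=E_{m-1}+e_{\pi(m)}<\tau_0-c_0+2c_0=\tau_0+c_0\).
\end{itemize}
In either case \(E_m\le\tau_0+c_0\), so \(\hat\tau(s_g)+c(s_g)\ge \tau_0+c_0-E_m\ge 0\). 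This is the lower-bound condition.

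\textbf{Conclusion and main obstacle.} Combined with \(E_m\ge\tau_0-c_0\), which gives the upper-bound inequality \(\hat\tau(s_g)\le c(s_g)\), the CI of \(s_g\) contains zero. Optimality then comes from \Cref{prop:vw-greedy}. For completeness, the optimality argument in that result runs as follows: any feasible \(s\) has \(\sum_k s_ke_k\ge\tau_0-c_0\). Dropping the coordinates with \(e_k\le 0\) keeps this sum at least as large. The top-\(r\) sum \(E_r\) dominates the sum over any \(r\) axes, so \(\|s\|_0\ge m\).

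The only delicate point is the overshoot bound. A single step can overshoot the threshold \(\tau_0-c_0\) by at most \(2c_0\), which is exactly the width of the band \([-c_0, c_0]\) translated. The inequality \(\Delta c_k\ge 0\) is needed to discard the \(2\Delta c\) term. Ties in the ordering do not matter, since the argument uses only that the sums \(E_r\) are prefix sums of a nonincreasing sequence.
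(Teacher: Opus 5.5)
Your proposal is correct and follows essentially the same route as the paper. The paper likewise rewrites the lower-bound condition so that each active axis of $s_g$ contributes $e_k-2\Delta c_k\le e_k$ (it calls these loads $b_k$), and then bounds the overshoot $E_m<\tau_0+c_0$ using minimality of $m$ and $e_{\pi(m)}\le 2c_0$. Your separate $m=1$ case is just the paper's step with $E_0=0<\tau_0-c_0$ written out explicitly.
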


The diagnostic \(\rho=\max_k|\delta_k-\Delta c_k|/(2c_0)\) gives a scope check: if \(\Delta c_k\ge 0\) for all \(k\in\mathcal K^+\) and \(\rho\le 1\), \Cref{prop:vw-auto-feas} applies; otherwise revert to \Cref{prop:vw-greedy}'s direct lower-bound check. In causal practice, prognostic covariates often shrink CIs (\(\Delta c_k<0\)), violating the structural condition; the fallback applies in those cases and succeeded in all additive regimes tested (Appendix~B). Setting \(\Delta c_k=0\) recovers \Cref{thm:additive-char}'s constant-width greedy, with a sharper bounded-step condition.

\subsection{Computation}
\label{sec:computation}

This section addresses usability. Under additive structure, MSP is tractable. In general, computing MSP requires finding the minimum-weight element of
\(\mathcal{N}(S)\), a search over \(\{0,1\}^K\), and exact computation is worst-case exponential. The goal here is to make the contrast explicit: tractable under structure, intractable in general, but often manageable in the small-\(K\) spaces used in practice.

In auditable specification spaces, $K$ is moderate (we use $K\le 5$), so exhaustive enumeration suffices and no shortcut is needed. When \Cref{thm:additive-char} holds, greedy gives $O(K\log K)$ computation; \Cref{prop:vw-greedy,prop:vw-auto-feas} extend this to variable-width CIs, with the additive $R^2$ serving as a diagnostic for when the shortcut is reliable. For larger $K$, branch-and-bound with $\widehat{\mathrm{MSP}}_{\mathrm{add}}$ as a pruning bound remains exact (verified against exhaustive enumeration, Appendix~B); this pruning is admissible only when interactions stay within the additive prediction, and should be disabled otherwise.

\begin{theorem}[Computational hardness without the bounded-step condition]
\label{thm:hardness}
Consider the \emph{additive MSP problem} with constant CI half-width but \emph{without} the bounded-step assumption \(\max_k|\delta_k|\le 2c\): given an instance \((K,\tau_0,c,\delta_1,\dots,\delta_K)\) with rational entries encoded in binary, determine
\(\mathrm{MSP}=\min\{\|s\|_0:s\in\mathcal{N}(S)\}\)
where \(\hat\tau(s)=\tau_0+\sum_k s_k\delta_k\) and \(\mathcal{N}(S)=\{s\in\{0,1\}^K:|\hat\tau(s)|\le c\}\).
This problem is NP-hard. \Cref{thm:additive-char} and \Cref{prop:vw-auto-feas} identify tractable bounded-step subclasses; without that structure, exact computation need not admit a polynomial-time algorithm.
\end{theorem}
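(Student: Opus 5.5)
We reduce from Subset Sum in its exact form, which is NP-complete under binary encoding of the integers: given positive integers \(a_1,\dots,a_K\) and a target \(T\), decide whether some subset sums exactly to \(T\). The cleanest route shows that even deciding whether \(\mathrm{MSP}<\infty\) is NP-hard. Any algorithm that outputs the exact value of MSP also decides this, so the optimization problem is NP-hard as well.

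\textbf{Construction.} Given an instance \((a_1,\dots,a_K,T)\), build an additive MSP instance as follows.
\begin{itemize}
\item Scale all integers by \(4\).
\item Set \(\delta_k=-4a_k\), \(\tau_0=4T\) and \(c=1\).
\end{itemize}
We may assume \(T\ge 1\), since the case \(T=0\) is trivially a yes-instance via the empty set. Then \(\tau_0>c\) holds, as required by the setup in \Cref{thm:additive-char}. The instance has size polynomial in the input, because the entries are just shifted binary integers.

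\textbf{Correctness.} For any \(s\in\{0,1\}^K\), the estimate is \(\hat\tau(s)=4\bigl(T-\sum_k s_k a_k\bigr)\), which is a multiple of \(4\). The condition \(|\hat\tau(s)|\le 1\) therefore holds if and only if \(\hat\tau(s)=0\), that is, if and only if \(\sum_k s_k a_k=T\). Hence \(\mathcal N(S)\neq\emptyset\) exactly when the Subset Sum instance is a yes-instance, and \(\mathrm{MSP}<\infty\) exactly in that case.

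If one prefers the minimization reading with a budget, consider the decision question "is \(\mathrm{MSP}\le m\)?". Taking \(m=K\) gives the same equivalence, so that decision problem is NP-complete. Membership in NP is clear: a certificate \(s\) can be checked in polynomial time by computing \(\hat\tau(s)\) exactly in rational arithmetic.

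\textbf{Why this escapes the earlier tractability results.} Here \(\max_k|\delta_k|=4\max_k a_k\), which can be much larger than \(2c=2\). The null band of width \(2c\) is therefore narrow relative to the individual steps, so partial sums can jump over the band. This overshooting is exactly what the greedy prefix argument of \Cref{thm:additive-char} rules out, and it is why the problem becomes NP-hard once the bounded-step condition is removed.

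\textbf{Main obstacle.} The only delicate point is making sure the reduction really lands in the stated problem class:
\begin{itemize}
\item the constants are rational and of polynomial bit-length;
\item the sign convention (\(\tau_0>c\), with all axes opposing) matches the setup;
\item the integrality argument gives "band \(\Leftrightarrow\) exact hit", which is what the scaling by \(4\) with \(c=1\) ensures.
\end{itemize}
We should also note a limitation. If one insisted on hardness of approximating the \emph{minimum cardinality} rather than feasibility, one would instead reduce from exact-cardinality Subset Sum or from Minimum Subset Sum. Feasibility already suffices for the NP-hardness claimed in \Cref{thm:hardness}.
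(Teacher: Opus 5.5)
Your proof is correct and follows essentially the same route as the paper: a reduction from SUBSET SUM in which every axis is opposing and the half-width is constant, an integrality argument showing that the null band forces an exact hit on the target, and the observation that deciding \(\mathrm{MSP}<\infty\) is already NP-complete. The only difference is cosmetic. You use integer data \(\delta_k=-4a_k\), \(\tau_0=4T\), \(c=1\), so the band isolates \(\hat\tau(s)=0\). The paper uses \(\delta_k=-a_k\), \(c=1/4\), \(\tau_0=T+1/4\), so the condition on \(\sum_k s_k a_k\) becomes the interval \([T,T+1/2]\).
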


The proof in Appendix~A reduces from SUBSET SUM \citep{garey2002computers}: given integers \(a_1,\dots,a_n\) and target \(T\), one constructs an additive MSP instance with \(K=n\) axes, opposing shifts \(\delta_k=-a_k\), and constant half-width \(c=1/4\). Since the \(a_k\) can be arbitrarily large relative to \(2c=1/2\), the bounded-step condition of \Cref{thm:additive-char} is violated, and null-compatibility reduces to \(\sum_{k\in I}a_k=T\). Since verifying a candidate \(s\) is polynomial in the binary input length, the feasibility version lies in NP and the reduction shows it is NP-complete; hence NP-hard.  This justifies the branch-and-bound and heuristic approaches above: MSP has no polynomial-time algorithm unless P\(=\)NP.

MSP is intended for specification spaces in which the analyst can enumerate, defend, and interpret each axis as one substantive robustness dimension. In applied work, this leads to moderate \(K\), not because the method is restricted to search spaces, but because a robustness exercise is not an unrestricted hyperparameter search. The object is an auditable set of analyst choices.

\section{Experiment}

\subsection{LaLonde Illustration}
\label{sec:lalonde}
\Cref{fig:lalonde-curve} illustrates MSP on the NSW treated sample with CPS comparison controls, in the LaLonde literature \citep{imbens2024lalonde,dehejia1999causal}. The specification space has \(K=4\) axes: estimator (OLS/IPW), covariates (basic/full), functional form (linear/nonlinear), trimming (off/on), yielding 16 configurations.

The baseline specification uses OLS, the basic covariate set, a linear outcome model, no trimming, yielding \(\hat\tau=-\$3{,}807\) with 95\% CI \([-\$5{,}106,-\$2{,}720]\).

Across the 16 configurations, 6 have CIs containing zero. MSP equals 1. The nearest configuration changes only the covariate set from basic to full on the raw-earnings scale. That specification yields \(\hat\tau = \$879\) with 95\% CI \([-\$380,\$2{,}196]\). Within this space, the conclusion is fragile.

Two points matter. First, MSP is conditional on the chosen axes. ``Fragile'' means fragile \emph{within the considered specification space}. Second, the axis that breaks the conclusion may reflect better adjustment for confounding rather than an arbitrary modeling choice. MSP measures sensitivity of the conclusion; it does not adjudicate which specification is correct, and a small MSP from a biased baseline reflects bias correction rather than genuine fragility of the underlying causal claim.

An additive diagnostic fit on the raw-earnings grid attains \(R^2=0.679\); the greedy rule predicts MSP\(=1\), matching enumeration. Illustrations are in Appendix~B (ML pipeline and Card--Krueger).

\subsection{Permutation Calibration of MSP on the NSW Randomized Sample (\Cref{prop:perm-calib})}

The LaLonde NSW experiment provides a setting in which randomization justifies a sharp-null permutation calibration. We use the randomized sample (185 treated, 260 controls) and consider a specification space with \(K=4\) binary axes (16 configurations): estimator (OLS vs.\ IPW), covariates (basic vs.\ full), functional form (linear vs.\ nonlinear), and outcome scale (raw earnings vs.\ \(\log(1+\text{earnings})\)). A trimming axis is omitted because it is not meaningful in this RCT. Because raw and log outcomes define different estimands, we report an outcome-scale check in Appendix~B.

Across the 16 configurations, the observed MSP equals \(\infty\): no specification yields a 95\% bootstrap confidence interval that contains zero (\(B=200\)). \(\mathrm{MSP}=\infty\) does not imply absolute robustness, but only that no admissible analysis within this space yields a zero-crossing interval.

We calibrate this extremal value using \(P=200\) permutations under complete randomization, recomputing MSP in each permutation with fixed bootstrap resamples (\(B=50\)) to ensure a deterministic statistic. The permutation distribution is concentrated at zero: the median MSP is 0, the mean is \(\approx\) 0.09, and \(\mathrm{MSP}=\infty\) occurs in \(2/200\) permutations. This \(\mathrm{MSP}=\infty\) lies in the upper tail of the null distribution, yielding an add-one Monte Carlo p-value of \(\hat p \approx 0.015\) (given \(P=200\)); \Cref{tab:nsw-permutation}.

\begin{table}[t]
    \centering
    \caption{Compact permutation calibration summary for the NSW randomized sample.}
    \label{tab:nsw-permutation}
    \begin{tabular}{@{}ccccc@{}}
        \toprule
        observed MSP & perm.\ median & perm.\ mean & \(P_{\pi}(\mathrm{MSP}=\infty)\) & \(\hat p\) \\
        \midrule
        \(\infty\) & 0 & 0.086 & 0.010 & 0.015 \\
        \bottomrule
    \end{tabular}
\end{table}

\subsection{MSP versus Fragility Index}

The Fragility Index (FI) \citep{walsh2014statistical} counts how many binary outcome changes needed to make the result non-significant, it measures \emph{data} fragility, whereas MSP measures \emph{specification} fragility. We compare the two on the NSW randomized sample. We report an adversarial variant $\widetilde{\mathrm{FI}}_{\mathrm{adv}}$ (zeroing highest earners first) and a random-ordering median $\widetilde{\mathrm{FI}}_{\mathrm{rnd}}$; details are in Appendix~B (\Cref{par:fragility}).

\begin{table}[t]
    \centering
    \caption{Fragility Index vs.\ MSP on the NSW randomized sample ($n_{\mathrm{treated}}=185$, $B=500$). The two metrics probe different perturbation types on the same result.}
    \label{tab:fragility-vs-msp}
    \begin{tabular}{@{}llcc@{}}
        \toprule
        metric & perturbation type & value & fraction perturbed \\
        \midrule
        $\widetilde{\mathrm{FI}}_{\mathrm{adv}}$ (adversarial) & data (top earners set to zero) & 3 & 3/185 treated (1.6\%) \\
        $\widetilde{\mathrm{FI}}_{\mathrm{rnd}}$ (random, median) & data (random outcomes set to zero) & 16 & 16/185 treated (8.6\%) \\
        MSP & specification (analyst choices) & $\infty$ & 0/16 configurations \\
        \bottomrule
    \end{tabular}
\end{table}

On that sample, $\widetilde{\mathrm{FI}}_{\mathrm{adv}}=3$ and $\widetilde{\mathrm{FI}}_{\mathrm{rnd}}=16$, while $\mathrm{MSP}=\infty$ (\Cref{tab:fragility-vs-msp}). Three extreme earners suffice to flip the conclusion under adversarial data perturbation, yet no specification in the declared 4-axis space does so. A result can therefore be \emph{data-fragile and specification-robust simultaneously}: $\widetilde{\mathrm{FI}}$ asks ``how few outcome records need to change?''; MSP asks ``how many analyst decisions need to change?'' Reporting both is informative because they capture distinct fragility dimensions; \Cref{par:fi-msp-flip} in Appendix~B gives synthetic regimes in which they diverge in opposite directions.

\subsection{Simulation}
Simulation does two jobs. It checks whether MSP behaves as intended and whether the theory tracks the metric in controlled settings. The data-generating process is
\[
Y = \tau A + g(X) + \varepsilon,
\]
where \(A\in\{0,1\}\) is the treatment indicator, \(g(X)\) is a systematic component, and \(\varepsilon\) is mean-zero noise, with four covariates, confounded treatment assignment, and continuous outcomes (Appendix~B). The baseline analysis keeps confounders, uses a flexible outcome model, and trims overlap observations. A perturbation sets one of four axes to a weaker choice: omit \(X_1\), omit \(X_2\), force linearity, or disable trimming. We enumerate \(2^4=16\) configurations, compute CIs with resamples.

\subsubsection{Power and Null Calibration}
\label{sec:sim-power}
\Cref{tab:power} reports the main summary. Under the null, MSP is near zero. For stronger effects, the median finite MSP rises from 1 to 2. The weak-effect regime (\(\tau=0.3\)) differs from the moderate-effect regime (\(\tau=0.7\)) in an informative way: many weak-effect replicates have \(\mathcal{N}(S)=\emptyset\), while moderate effects usually satisfy \(\mathcal{N}(S)\neq\emptyset\) with MSP \(=1\).

The mechanism is overshoot: a perturbation drives the CI past zero rather than to it, so no null-compatible configuration (one with \(0\in\mathrm{CI}(s)\)) exists. Under \(\tau=0.3\), omitting a confounder drives the estimate from positive to significantly negative, so the CI skips over zero. Under \(\tau=0.7\), the same perturbation lands near zero, so a configuration whose confidence interval contains zero exists.

\begin{table}[t]
    \centering
    \caption{Simulation summary for power and null calibration (\(n=800\), \(B=100\), \(120\) replicates per \(\tau\)).}
    \label{tab:power}
    \begin{tabular}{@{}rrrrr@{}}
        \toprule
        \(\tau\) & \(P(\mathrm{MSP}=\infty)\) & median\((\mathrm{MSP}\mid \mathrm{finite})\) & \(P(\mathrm{MSP}\le 1)\) & baseline significant \\
        \midrule
        0.0 & 0.075 & 0 & 0.925 & 0.092 \\
        0.3 & 0.833 & 1 & 0.142 & 0.950 \\
        0.7 & 0.025 & 1 & 0.958 & 1.000 \\
        1.5 & 0.125 & 2 & 0.000 & 1.000 \\
        \bottomrule
    \end{tabular}
\end{table}

Under the null, 92.5\% of replicates satisfy \(\mathrm{MSP}\le 1\), confirming tight null concentration. Refinement, additive-fit, and bootstrap-sensitivity checks are in Appendix~B.

\subsubsection{MSP versus Specification-Curve Summaries}

A question is whether MSP adds information beyond standard specification-curve summaries. We therefore compare MSP with three baseline diagnostics computed from the same 16-configuration bootstrap results: the share of configurations whose confidence intervals contain zero (\emph{share\_null\_compat}), the share significant in either direction (\emph{share\_sig\_any}), the share significantly positive (\emph{share\_sig\_pos}), and the dispersion of \(\hat\tau\) across configurations.

\begin{table}[t]
    \centering
    \caption{MSP and baseline summaries by effect size (\(n=800\), \(B=100\), 200 replicates per \(\tau\)). Dispersion is the standard deviation of \(\hat\tau\) across all 16 configurations.}
    \label{tab:block4-comparison}
    \setlength{\tabcolsep}{3pt}
    \begin{tabular}{@{}r rr rrr r@{}}
        \toprule
        \(\tau\) &
        \(P(\mathrm{MSP}=\infty)\) &
        med(MSP\(\mid\)finite) &
        \shortstack[c]{share\\null-compat.} &
        \shortstack[c]{share sig.\\(any)} &
        \shortstack[c]{share sig.\\(pos.)} &
        dispersion \\
        \midrule
        0.0 & 0.085 & 0 & 0.227 & 0.773 & 0.010 & 0.498 \\
        0.3 & \textbf{0.815} & 1 & 0.035 & \textbf{0.965} & 0.236 & 0.500 \\
        0.7 & 0.025 & 1 & 0.417 & 0.583 & 0.264 & 0.499 \\
        1.5 & 0.080 & 2 & 0.185 & 0.815 & 0.814 & 0.503 \\
        \bottomrule
    \end{tabular}
\end{table}

Table \ref{tab:block4-comparison} shows the key contrast is that dispersion varies little while MSP shifts sharply. Across the four effect-size regimes, estimate dispersion remains nearly constant at 0.498--0.503, a range of only 0.005. By contrast, \(P(\mathrm{MSP}=\infty)\) swings from 2.5\% at \(\tau=0.7\) to 81.5\% at \(\tau=0.3\). Dispersion carries essentially no information about whether \(\mathcal{N}(S)\) is empty.

A second contrast: at \(\tau=0.3\), \emph{share\_significant\_any} is 0.965 yet 81.5\% of replicates have \(\mathrm{MSP}=\infty\). The overshoot (\Cref{sec:sim-power}) explains this: confounding bias flips estimates to significantly negative, so all configurations are significant in some direction while none straddles zero. \emph{Share\_significant\_any} measures significance; MSP measures distance to falsification.

\subsubsection{Decision-Rule Comparison}
\label{sec:decision-rules}

Treating each summary as a binary classifier (true positive: \(\tau\in\{0.7,1.5\}\); true negative: \(\tau\in\{0.0,0.3\}\)), MSP achieves AUC\(=0.941\), followed by \emph{share\_sig\_pos} (\(0.889\)). \emph{Share\_sig\_any} attains AUC\(=0.236<0.5\), because the \(\tau=0.3\) overshoot regime makes all configurations significant in some direction. \Cref{tab:decision-rules} in Appendix~B reports false-positive rates at \(\tau=0.3\); \(\mathrm{MSP}\ge 2\) achieves FPR\(=0.010\) while dispersion-based thresholds exceed 0.86, making them unreliable. (Replicates with \(\mathrm{MSP}=\infty\) are scored as non-robust in the classifier; see Appendix~B for details.) ROC curves are in Appendix~B (\Cref{fig:block6-roc}); Appendix~B compares MSP with the SCA joint test.

\section{Conclusion}
MSP measures falsification complexity in a declared specification space: the minimum number of coordinated specification changes needed for the confidence interval to contain zero. We establish refinement monotonicity and exact sharp-null permutation calibration under complete randomization, and we characterize computation: greedy \(O(K\log K)\) under bounded-step additivity and tractable variable-width extensions (\Cref{prop:vw-greedy,prop:vw-auto-feas}), but NP-hardness in general.

In the LaLonde benchmark, MSP equals 1: one analyst decision moves the CI to include zero. In simulation, MSP is small under the null and grows with effect strength when falsification is possible. MSP is conditional on the axes and inference procedure, and exact calibration is confined to randomized settings; in observational data, permutations are only diagnostic references.

\textbf{Limitations.} MSP does not certify causal validity unconditionally and does not bound sensitivity to unmeasured confounding or choices outside the declared space. The additive tractability results depend on structural assumptions that may not hold, and exact computation is NP-hard in general.

\section*{Acknowledgments}
We thank the reviewers for their constructive feedback.

\bibliographystyle{plainnat}
\bibliography{references}

\appendix
\section*{Appendix A: Proofs}

Proofs are provided in full detail. For self-containedness we restate the key definitions and proposition statements before each proof.

\paragraph{Notation and definitions.}
Let \(S \subseteq \{0,1\}^K\) be a \emph{specification space}. Each coordinate is a binary axis, and \(s_k = 1\) means that axis \(k\) is perturbed relative to a baseline \(s = \mathbf{0}\). Let \(\hat{\tau}(s)\) be an estimated treatment effect and let \(\mathrm{CI}(s)\) be its CI. The \emph{set of specifications whose CIs contain zero} is
\[
\mathcal{N}(S) = \{s \in S : 0 \in \mathrm{CI}(s)\}.
\]
For theory it is useful to separate the combinatorial object from its causal interpretation. Let \(F \subseteq S\) be any feasible set. The \emph{Minimum Specification Perturbation} (MSP) of \((S,F)\) is
\[
\mathrm{MSP}(S,F) =
\begin{cases}
\min \{\lVert s \rVert_0 : s \in F\}, & F \neq \emptyset, \\
\infty, & F = \emptyset,
\end{cases}
\]
where \(\|s\|_0 = \sum_{k=1}^K s_k\) is the Hamming weight of \(s\). The causal version sets \(F = \mathcal{N}(S)\).

A \emph{feasibility-preserving refinement} of \((S,F)\) is a refined space \((S',F')\) with \(S'=\{0,1\}^{K'}\), \(F'\subseteq S'\), together with an embedding \(\iota:S\to S'\) such that
\[
\|\iota(s)\|_0=\|s\|_0 \quad \text{for all } s\in S,
\qquad
\iota(F)\subseteq F'.
\]
Thus each coarse feasible configuration remains feasible in the refined space, under an embedding that preserves Hamming cost.

\paragraph{\Cref{prop:refinement} (Refinement monotonicity).}
\emph{Statement.} Let \((S',F')\) be a feasibility-preserving refinement of \((S,F)\). Then
\[
\mathrm{MSP}(S',F') \le \mathrm{MSP}(S,F).
\]

\begin{proof}
If \(F=\emptyset\), then \(\mathrm{MSP}(S,F)=\infty\), so the inequality  holds immediately. Assume \(F\neq\emptyset\), and let \(s^\star\in F\) attain \(\mathrm{MSP}(S,F)=\|s^\star\|_0\).

\[
\|\iota(s^\star)\|_0=\|s^\star\|_0
\]
by the defining Hamming-weight preservation property of a feasibility-preserving refinement. Also, \(\iota(F)\subseteq F'\), so \(\iota(s^\star)\in F'\). Therefore
\[
\mathrm{MSP}(S',F')
\le
\|\iota(s^\star)\|_0
=
\|s^\star\|_0
=
\mathrm{MSP}(S,F),
\]
which proves the claim.
\end{proof}

\paragraph{\Cref{prop:perm-calib} (Permutation calibration).}
\emph{Statement.} Under a sharp null \(H_0:Y_i(1)=Y_i(0)\) for all \(i\), complete randomization with a fixed number of treated units, and fixed bootstrap resamples \(U\), the add-one Monte Carlo permutation p-value based on MSP satisfies
\[
\Pr(\hat p \le \alpha \mid H_0) \le \alpha
\qquad
\forall \alpha\in[0,1].
\]
Here \(\hat p = (1+\#\{j\in\{1,\dots,P\}:V_j\ge V_0\})/(P+1)\), where \(V_0\) is the observed MSP and \(V_1,\dots,V_P\) are the MSP values computed on \(P\) independent permutations of the treatment assignment vector.

\begin{proof}
Write \(D(\mathbf A)=\{(x_i,A_i,Y_i)\}_{i=1}^n\), and let \(T(D;S,U)\) denote MSP computed on dataset \(D\), specification space \(S\), and fixed bootstrap resamples \(U\). Once \(S\) and \(U\) are fixed, \(T(\cdot;S,U)\) is a deterministic function of the dataset: the same data and the same resamples produce the same CIs, the same set \(\mathcal{N}(S)\) of specifications whose CIs contain zero, and the same MSP.

Under the sharp null \(H_0:Y_i(1)=Y_i(0)\) for all \(i\), the observed outcomes satisfy \(Y_i=Y_i(0)\) regardless of \(A_i\). Thus, conditioning on the fixed covariates and potential outcomes, the only randomness in \(D(\mathbf A)\) comes from the treatment assignment vector \(\mathbf A\).

Under complete randomization with a fixed number of treated units, \(\mathbf A\) is uniform on
\[
\Omega_{\mathbf A}=\{\mathbf a\in\{0,1\}^n:\sum_{i=1}^n a_i=n_1\}.
\]
For any permutation \(\pi\) of \(\{1,\dots,n\}\), the permuted vector \(\mathbf A_\pi\) also lies in \(\Omega_{\mathbf A}\), and the map \(\mathbf a\mapsto \mathbf a_\pi\) is a bijection of \(\Omega_{\mathbf A}\). Hence \(\mathbf A_\pi\) has the same distribution as \(\mathbf A\). Because outcomes are fixed under \(H_0\), the permuted dataset
\[
D_\pi=\{(x_i,A_{\pi(i)},Y_i)\}_{i=1}^n
\]
has the same distribution as \(D(\mathbf A)\).

Let \(\pi_0\) be the identity permutation, and let \(\pi_1,\dots,\pi_P\) be i.i.d. uniform permutations, independent of \(\mathbf A\). Define
\[
V_j=T(D_{\pi_j};S,U), \qquad j=0,1,\dots,P.
\]
Because \(T(\cdot;S,U)\) is deterministic once \(S\) and \(U\) are fixed, it remains to justify that \((V_0,\dots,V_P)\) is jointly exchangeable. Let \(G_0,\dots,G_P\) be i.i.d. uniform permutations of \(\{1,\dots,n\}\), independent of \(\mathbf A\). Since \(\mathbf A\) is uniform on \(\Omega_{\mathbf A}\) and \(G_0\) is independent of \(\mathbf A\), the permuted assignment vector \(\mathbf A_{G_0}\) has the same distribution as \(\mathbf A\). Therefore
\[
\bigl(D(\mathbf A),D_{G_1},\dots,D_{G_P}\bigr)
\stackrel{d}{=}
\bigl(D_{G_0},D_{G_1},\dots,D_{G_P}\bigr).
\]
Conditional on \(\mathbf A\), the assignment vectors \(\mathbf A_{G_0},\dots,\mathbf A_{G_P}\) are i.i.d. uniform on \(\Omega_{\mathbf A}\): for fixed \(\mathbf A\), every element of \(\Omega_{\mathbf A}\) is obtained from \(\mathbf A\) by the same number of permutations, and the permutations \(G_0,\dots,G_P\) are independent. Hence the datasets \(D_{G_0},\dots,D_{G_P}\) are conditionally i.i.d., and therefore jointly exchangeable, given \(\mathbf A\). Applying the deterministic map \(T(\cdot;S,U)\) coordinatewise preserves conditional exchangeability, so
\[
\bigl(T(D_{G_0};S,U),\dots,T(D_{G_P};S,U)\bigr)
\]
is conditionally exchangeable given \(\mathbf A\), hence exchangeable unconditionally. By the preceding distributional equality, \((V_0,\dots,V_P)\) is also jointly exchangeable under \(H_0\).

The standard add-one Monte Carlo randomization-test argument now applies; see, for example, \citep{phipson2016permutation}. Because \((V_0,\dots,V_P)\) is jointly exchangeable under \(H_0\), the observed statistic \(V_0\) is no more likely than any of the \(P\) permuted statistics to appear in the extreme upper tail. After applying the add-one rule to handle ties, the Monte Carlo permutation p-value
\[
\hat p
=
\frac{1+\#\{j\in\{1,\dots,P\}:V_j\ge V_0\}}{P+1}
\]
satisfies
\[
\Pr(\hat p\le \alpha\mid H_0)\le \alpha
\qquad
\forall \alpha\in[0,1].
\]
This is exactly the stated result.
\end{proof}

\paragraph{\Cref{prop:weighted-sandwich} (Weighted MSP sandwich).}
\emph{Statement.} Let \(w_k>0\) for each axis \(k=1,\dots,K\), and define the weighted MSP as
\[
\mathrm{wMSP}(S,F,w)=\min_{s\in F}\sum_{k=1}^K s_k w_k.
\]
Write \(w_{\min}=\min_k w_k\) and \(w_{\max}=\max_k w_k\). Then
\[
w_{\min}\,\mathrm{MSP}(S,F)
\le
\mathrm{wMSP}(S,F,w)
\le
w_{\max}\,\mathrm{MSP}(S,F).
\]

\begin{proof}
If \(F=\emptyset\), then both \(\mathrm{MSP}(S,F)\) and \(\mathrm{wMSP}(S,F,w)\) equal \(\infty\), so the inequalities hold in the extended nonnegative reals. Assume \(F\neq\emptyset\).

Let \(s^\star\in F\) attain \(\mathrm{MSP}(S,F)=\|s^\star\|_0\). Since \(s^\star\) is feasible for the weighted problem,
\[
\mathrm{wMSP}(S,F,w)
\le
\sum_{k=1}^K s_k^\star w_k
\le
\sum_{k=1}^K s_k^\star w_{\max}
=
w_{\max}\|s^\star\|_0
=
w_{\max}\,\mathrm{MSP}(S,F).
\]
This proves the upper bound.

For the lower bound, let \(\tilde s\in F\) attain \(\mathrm{wMSP}(S,F,w)=\sum_{k=1}^K \tilde s_k w_k\). Because each \(w_k\ge w_{\min}\),
\[
\mathrm{wMSP}(S,F,w)
=
\sum_{k=1}^K \tilde s_k w_k
\ge
\sum_{k=1}^K \tilde s_k w_{\min}
=
w_{\min}\|\tilde s\|_0.
\]
Since \(\tilde s\in F\), its Hamming weight cannot be smaller than the minimum feasible Hamming weight:
\[
\|\tilde s\|_0 \ge \mathrm{MSP}(S,F).
\]
Combining the last two displays gives
\[
\mathrm{wMSP}(S,F,w)\ge w_{\min}\,\mathrm{MSP}(S,F).
\]
Together with the upper bound, this yields the sandwich inequality. The bound is tight when weights are constant.
\end{proof}

\paragraph{\Cref{thm:additive-char} (Additive characterization).}
\emph{Statement.} Suppose \(S=\{0,1\}^K\) and the point-estimate map is additive:
\[
\hat{\tau}(s)=\tau_0+\sum_{k=1}^K s_k\delta_k.
\]
Assume a symmetric CI with constant half-width \(c>0\), a positive baseline \(\tau_0>c\), and the bounded individual effect condition \(\max_k|\delta_k|\le 2c\). Let \(\mathcal K^-=\{k:\tau_0\delta_k<0\}\) be the set of \emph{opposing axes} (those whose activation moves the estimate toward zero) and write \(K^-=|\mathcal K^-|\). Order the opposing axes by decreasing magnitude:
\[
|\delta_{\pi(1)}|\ge \cdots \ge |\delta_{\pi(K^-)}|.
\]
Then
\[
\mathrm{MSP}
=
\min\left\{m\in\{1,\dots,K^-\}:\sum_{j=1}^m |\delta_{\pi(j)}|\ge \tau_0-c\right\},
\]
if such \(m\) exists, and \(\mathrm{MSP}=\infty\) otherwise. The case of a negative significant baseline (\(\tau_0<-c\)) follows by sign symmetry: apply the result to \((-\tau_0,-\delta_k)\). The same remark applies to \Cref{prop:vw-greedy,prop:vw-auto-feas} below.

\begin{proof}
Because \(\tau_0>c\), the baseline configuration does not have a CI containing zero. For any subset \(T\subseteq \mathcal K^-\), define
\[
D(T)=\sum_{k\in T}|\delta_k|.
\]
The problem reduces to selecting a subset of opposing axes of minimum cardinality such that the cumulative shift enters the band of estimates whose CIs contain zero.
Each \(k\in \mathcal K^-\) moves the estimate toward zero, so \(\delta_k=-|\delta_k|\). Hence, if \(s(T)\) is the configuration that activates exactly the axes in \(T\),
\[
\hat\tau(s(T))=\tau_0-D(T).
\]
Null-compatibility is therefore equivalent to
\[
|\hat\tau(s(T))|\le c
\quad\Longleftrightarrow\quad
D(T)\in[\tau_0-c,\tau_0+c].
\]

Axes not in \(\mathcal K^-\) cannot reduce the minimum cardinality required. The upper-bound inequality \(\hat\tau(s)\le c\) rearranges to
\[
\sum_{k=1}^K s_k(-\delta_k)\ge \tau_0-c.
\]
For \(k\notin\mathcal K^-\), \(\delta_k\) has the same sign as \(\tau_0>0\), so \(-\delta_k\le 0\) and activating \(k\) cannot contribute positively to this sum (the case \(\delta_k=0\) leaves the sum unchanged but increases \(\|s\|_0\)). Hence any null-compatible configuration must already satisfy the upper bound using its activated axes in \(\mathcal K^-\), and a minimum-cardinality feasible set may be sought among subsets of \(\mathcal K^-\) only.

For \(r=1,\dots,K^-\), let
\[
D_r=\sum_{j=1}^r |\delta_{\pi(j)}|.
\]
By construction, among all subsets \(T\subseteq \mathcal K^-\) with \(|T|=r\), the maximum possible value of \(D(T)\) is \(D_r\): sorting in decreasing order of \(|\delta_k|\) makes the prefix \(\{\pi(1),\dots,\pi(r)\}\) the best subset of size \(r\). Thus, for any fixed cardinality \(r\), the greedy prefix achieves the maximal possible attenuation toward zero.

Suppose first that there exists \(m\) with \(D_m\ge \tau_0-c\), and let \(m\) be the smallest such index. Then for every \(r<m\), \(D_r<\tau_0-c\). Since any subset \(T\subseteq \mathcal K^-\) with \(|T|=r\) satisfies \(D(T)\le D_r\), no subset of size \(r<m\) can yield a CI containing zero. This establishes prefix optimality: no feasible set uses fewer than \(m\) opposing axes.

It remains to show that the greedy prefix of size \(m\) is feasible. By minimality of \(m\),
\[
D_{m-1}<\tau_0-c.
\]
Using the bounded step-size condition on the next increment,
\[
D_m
=
D_{m-1}+|\delta_{\pi(m)}|
<
(\tau_0-c)+2c
=
\tau_0+c.
\]
Together with \(D_m\ge \tau_0-c\), this implies
\[
D_m\in[\tau_0-c,\tau_0+c].
\]
Hence the prefix \(\{\pi(1),\dots,\pi(m)\}\) yields a CI containing zero, so \(\mathrm{MSP}=m\).

If no such \(m\) exists, then \(D_r<\tau_0-c\) for every \(r\le K^-\). Therefore every subset \(T\subseteq \mathcal K^-\) satisfies \(D(T)<\tau_0-c\), so no subset can enter the band of estimates whose CIs contain zero. Since non-opposing axes cannot help satisfy the upper-bound inequality (as established above), no configuration in \(S\) yields a CI containing zero, and \(\mathrm{MSP}=\infty\).
\end{proof}

\paragraph{\Cref{prop:vw-greedy} (Variable-width greedy exactness).}
Statement. Suppose \(S=\{0,1\}^K\) with additive point estimates and additive confidence-interval half-widths:
\[
\hat\tau(s)=\tau_0+\sum_{k=1}^K s_k\delta_k,
\qquad
c(s)=c_0+\sum_{k=1}^K s_k\Delta c_k,
\]
with \(\tau_0>c_0>0\) and \(c(s)\ge 0\) for all \(s\in S\). A configuration \(s\) has a CI containing zero if and only if \(\hat\tau(s)\in[-c(s),c(s)]\). Define the \emph{effective shifts}
\[
e_k=\Delta c_k-\delta_k,
\]
let \(\mathcal K^+=\{k:e_k>0\}\) and write \(K^+=|\mathcal K^+|\). Order these axes so that
\[
e_{\pi(1)}\ge \cdots \ge e_{\pi(K^+)} > 0.
\]
If ties occur, fix any deterministic tie-breaking rule; because tied effective shifts \(e_k\) may have differing width adjustments \(\Delta c_k\), an implementation should evaluate all tied candidate prefixes before concluding that the greedy approach has failed. If \(m\) exists but \(s_g\) fails the lower-bound check, fall back to exact search (e.g.\ branch-and-bound).
For \(r=1,\dots,K^+\), let
\[
E_r=\sum_{j=1}^r e_{\pi(j)},
\]
and let
\[
m=\min\{r\in\{1,\dots,K^+\}:E_r\ge \tau_0-c_0\},
\]
if such \(m\) exists. Let \(s_g\) activate exactly the axes
\(\{\pi(1),\dots,\pi(m)\}\). If \(s_g\) satisfies the lower-bound condition
\[
\hat\tau(s_g)\ge -c(s_g),
\]
then \(s_g\) yields a CI containing zero and
\[
\|s_g\|_0=\mathrm{MSP}.
\]
If no such \(m\) exists, then \(\mathrm{MSP}=\infty\).

\begin{proof}
For any configuration \(s\),
\[
\hat\tau(s)\le c(s)
\quad\Longleftrightarrow\quad
\tau_0+\sum_{k=1}^K s_k\delta_k\le c_0+\sum_{k=1}^K s_k\Delta c_k
\quad\Longleftrightarrow\quad
\sum_{k=1}^K s_k e_k\ge \tau_0-c_0.
\]
Thus the upper-bound part of null-compatibility depends only on the effective
shifts \(e_k=\Delta c_k-\delta_k\). If \(e_k\le 0\), activating axis \(k\)
cannot help this inequality and only increases Hamming weight, so a
minimum-cardinality configuration may be sought among subsets of \(\mathcal K^+\)
only.

For \(T\subseteq \mathcal K^+\), write \(E(T)=\sum_{k\in T} e_k\). By the
ordering of the \(e_k\)'s, for every \(r\in\{1,\dots,K^+\}\) the largest
possible value of \(E(T)\) over all subsets \(T\subseteq \mathcal K^+\) with
\(|T|=r\) is \(E_r\), attained by the greedy prefix
\(\{\pi(1),\dots,\pi(r)\}\).

Suppose \(m\) exists and is the smallest index with \(E_m\ge \tau_0-c_0\).
Then for every \(r<m\), \(E_r<\tau_0-c_0\). Hence any subset
\(T\subseteq \mathcal K^+\) with \(|T|=r\) satisfies
\[
E(T)\le E_r<\tau_0-c_0,
\]
so it fails the upper-bound inequality. Since axes outside \(\mathcal K^+\)
cannot help with that inequality, no configuration with fewer than \(m\)
active coordinates can yield a CI containing zero. This yields
\[
\mathrm{MSP}\ge m.
\]
By definition of \(m\), the greedy prefix \(s_g\) satisfies
\[
\sum_{k=1}^K (s_g)_k e_k = E_m\ge \tau_0-c_0,
\]
hence \(\hat\tau(s_g)\le c(s_g)\). If \(s_g\) also satisfies the lower-bound
condition \(\hat\tau(s_g)\ge -c(s_g)\), then \(s_g\) yields a CI containing zero. Since
\(s_g\) has exactly \(m\) active axes and no configuration with a CI containing zero can
use fewer than \(m\), it follows that
\[
\|s_g\|_0=m=\mathrm{MSP}.
\]

If no such \(m\) exists, then \(E_r<\tau_0-c_0\) for every \(r\le K^+\). By
the maximality of greedy prefix sums, no configuration satisfies the
upper-bound inequality, hence the set of specifications whose CIs contain zero is empty and
\(\mathrm{MSP}=\infty\).
\end{proof}

\paragraph{\Cref{prop:vw-auto-feas} (A structural sufficient condition for automatic feasibility).}
Statement. Under the setup of \Cref{prop:vw-greedy} (additive point estimates and additive CI half-widths with \(\tau_0>c_0>0\)), additionally assume
\[
\Delta c_k\ge 0 \quad \forall k\in\mathcal K^+,
\qquad
\max_{k\in\mathcal K^+} e_k \le 2c_0.
\]
The first condition says that activating a helpful axis does not shrink the CI; the second says that no single effective shift can overshoot the full null-band gap \(2c_0\). Then the greedy prefix \(s_g\) from \Cref{prop:vw-greedy} automatically satisfies the
lower-bound condition \(\hat\tau(s_g)\ge -c(s_g)\). Consequently, whenever
\(m\) exists, \(s_g\) yields a CI containing zero and \(\|s_g\|_0=\mathrm{MSP}\); if no
such \(m\) exists, then \(\mathrm{MSP}=\infty\).

\begin{proof}
Keep the notation \(m\), \(E_m\), and \(s_g\) from \Cref{prop:vw-greedy}, and define
the lower-bound loads
\[
b_k=-\delta_k-\Delta c_k.
\]
For any configuration \(s\),
\[
\hat\tau(s)\ge -c(s)
\quad\Longleftrightarrow\quad
\tau_0+\sum_{k=1}^K s_k\delta_k\ge -\Big(c_0+\sum_{k=1}^K s_k\Delta c_k\Big)
\quad\Longleftrightarrow\quad
\sum_{k=1}^K s_k b_k \le \tau_0+c_0.
\]
Because \(s_g\) is supported on \(\mathcal K^+\) and \(\Delta c_k\ge 0\) for
every \(k\in\mathcal K^+\), we have \(b_k\le e_k\) on the support of \(s_g\),
since
\[
e_k-b_k
=
(\Delta c_k-\delta_k)-(-\delta_k-\Delta c_k)
=
2\Delta c_k
\ge 0.
\]
Hence
\[
\sum_{k=1}^K (s_g)_k b_k
\le
\sum_{k=1}^K (s_g)_k e_k
=
E_m.
\]
By minimality of \(m\), \(E_{m-1}<\tau_0-c_0\). Since the last selected axis
\(\pi(m)\in\mathcal K^+\) satisfies \(e_{\pi(m)}\le 2c_0\),
\[
E_m
=
E_{m-1}+e_{\pi(m)}
<
(\tau_0-c_0)+e_{\pi(m)}
\le
(\tau_0-c_0)+2c_0
=
\tau_0+c_0.
\]
Therefore \(\sum_k (s_g)_k b_k < \tau_0+c_0\), which strictly implies the
non-strict lower-bound condition \(\sum_k (s_g)_k b_k \le \tau_0+c_0\), i.e.\
\(\hat\tau(s_g)\ge -c(s_g)\). The remaining claims then
follow from \Cref{prop:vw-greedy}.
\end{proof}

\paragraph{\Cref{thm:hardness} (Computational hardness without the bounded-step condition).}
\label{par:prop7}
\emph{Statement.} Consider the \emph{additive MSP problem} with constant CI half-width but \emph{without} the bounded-step assumption \(\max_k|\delta_k|\le 2c\): given an instance \((K,\tau_0,c,\delta_1,\dots,\delta_K)\) with rational entries encoded in binary, determine
\(\mathrm{MSP}=\min\{\|s\|_0:s\in\mathcal{N}(S)\}\)
where \(\hat\tau(s)=\tau_0+\sum_k s_k\delta_k\) and \(\mathcal{N}(S)=\{s\in\{0,1\}^K:|\hat\tau(s)|\le c\}\).
This problem is NP-hard. \Cref{thm:additive-char} and \Cref{prop:vw-auto-feas} identify tractable bounded-step subclasses; without that structure, exact computation need not admit a polynomial-time algorithm.

\begin{proof}
This hardness result applies outside the bounded-step tractable subclasses of \Cref{thm:additive-char} and \Cref{prop:vw-auto-feas}; those results identify settings where greedy yields \(O(K\log K)\) computation.

We reduce from the SUBSET SUM problem, which is known to be NP-complete (Garey and Johnson, 1979, Problem~SP13).

\medskip
\noindent\textbf{Step 1: SUBSET SUM instance.}
Let \((a_1,\dots,a_n,T)\) be an arbitrary instance of SUBSET SUM, where each \(a_k\) is a positive integer and \(T\) is a positive integer target. The decision problem asks whether there exists a subset \(I\subseteq\{1,\dots,n\}\) such that \(\sum_{k\in I}a_k=T\).

\medskip
\noindent\textbf{Step 2: Construction of the MSP instance.}
Define an additive MSP instance with \(K=n\) binary axes, opposing shifts \(\delta_k=-a_k\) for each \(k\), constant CI half-width \(c=1/4\), and baseline point estimate \(\tau_0=T+c=T+1/4\). The point estimate at configuration \(s\) is
\[
\hat\tau(s)=\tau_0+\sum_{k=1}^K s_k\delta_k = T+\tfrac{1}{4}-\sum_{k=1}^K s_k a_k.
\]

\medskip
\noindent\textbf{Step 3: Null-compatibility condition.}
A configuration \(s\) has a CI containing zero if and only if \(|\hat\tau(s)|\le c=1/4\), i.e.,
\[
-\tfrac{1}{4}\le T+\tfrac{1}{4}-\sum_{k=1}^K s_k a_k \le \tfrac{1}{4}.
\]
Rearranging gives
\[
T\le \sum_{k=1}^K s_k a_k \le T+\tfrac{1}{2}.
\]
Since all \(a_k\) are positive integers, the sum \(\sum_k s_k a_k\) is a nonneg\-ative integer. The interval \([T,\,T+1/2]\) contains exactly one nonneg\-ative integer, namely \(T\) (the next integer \(T+1\) exceeds \(T+1/2\)). Therefore null-compatibility reduces to
\[
\sum_{k=1}^K s_k a_k = T,
\]
which is exactly the SUBSET SUM decision problem.

\medskip
\noindent\textbf{Step 4: Feasibility version.}
Consider the feasibility problem: does there exist \(s\in\{0,1\}^K\) such that \(|\hat\tau(s)|\le c\)? This problem belongs to NP: given a candidate \(s\), one verifies \(|\hat\tau(s)|\le c\) in time polynomial in the binary input length. The reduction in Steps~1--3 is polynomial-time (the MSP instance has size linear in the SUBSET SUM input). Combined with the NP-hardness from the reduction, this feasibility problem is NP-complete.

\medskip
\noindent\textbf{Step 5: From feasibility to optimization.}
Computing \(\mathrm{MSP}=\min_{s\in\mathcal{N}(S)}\|s\|_0\) is at least as hard as the feasibility problem, since from the output of the optimization one can determine feasibility immediately: \(\mathcal{N}(S)\neq\emptyset\) if and only if \(\mathrm{MSP}<\infty\). Since the feasibility version is NP-complete, the optimization version is NP-hard.
\end{proof}

\paragraph{Empirical pass/fail evidence for variable-width greedy.}
Appendix~B (\Cref{tab:vw-greedy-check}) reports greedy exactness across the empirical illustrations and the \(K=10\) synthetic experiment. Greedy gives the exact MSP in every empirical example, including two cases where \(\rho>1\) (LaLonde: \(\rho=1.641\); ML pipeline: \(\rho=1.317\)), outside the automatic-feasibility envelope of \Cref{prop:vw-auto-feas} but still passing the lower-bound check of \Cref{prop:vw-greedy}. In both additive synthetic regimes (constant and variable width), greedy is exact in every replicate. In the interaction-heavy regime the exactness rate drops to 72.5\%, but branch-and-bound recovers the exact MSP in all cases.

\section*{Appendix B: Supplementary Empirics}

\paragraph{ML pipeline illustration.}
\label{par:ml-pipeline}
\emph{Purpose: to show that MSP applies outside causal inference and to provide an example where \(\mathrm{MSP}=\infty\), contrasting with the fragile LaLonde result.}

We apply MSP to a supervised classification pipeline on the UCI Adult income dataset \citep{adult_2}, which asks whether annual income exceeds \$50{,}000. The declared conclusion is that a fitted model significantly outperforms the majority-class baseline (78.2\% accuracy) on a held-out test set of 2{,}000 observations. Significance is assessed via a 95\% percentile bootstrap CI for the accuracy gain \(\hat\Delta=\hat p_{\mathrm{model}}-p_{\mathrm{baseline}}\), using \(B=200\) bootstrap resamples held fixed across all configurations.

The specification space has \(K=4\) binary axes: model class (logistic regression vs.\ random forest), feature set (full 14-dimensional set vs.\ reduced 6-feature demographic subset), missing-value handling (mean imputation vs.\ row deletion), and regularization (default strength vs.\ relaxed). This yields 16 configurations. The baseline uses random forest, the full feature set, mean imputation, and default regularization.

The baseline accuracy gain is \(\hat\Delta = 0.083\) with 95\% bootstrap CI \([0.070, 0.098]\). Across all 16 configurations, every CI excludes zero and \(\mathrm{MSP}=\infty\). The most adversarial configuration still gives \(\hat\Delta=0.022\) with 95\% CI \([0.008, 0.037]\). Thus MSP is meaningful outside economics: in contrast to the fragile LaLonde illustration, no admissible perturbation in this declared pipeline reaches a configuration whose CI contains zero.

\begin{table}[h]
    \centering
    \caption{MSP summary for the ML pipeline illustration (UCI Adult, \(K=4\), 16 configurations, \(B=200\)).}
    \label{tab:ml-pipeline-msp}
    \begin{tabular}{@{}cccc@{}}
        \toprule
        baseline \(\hat\Delta\) & baseline 95\% CI & \(\#\) configs with CI\(\ni 0\) & MSP \\
        \midrule
        0.083 & [0.070, 0.098] & 0 / 16 & \(\infty\) \\
        \bottomrule
    \end{tabular}
\end{table}

\paragraph{Illustrative weighted MSP on LaLonde.}
To make \Cref{prop:weighted-sandwich} concrete, we revisit the raw-earnings subset of the
LaLonde illustration, where the unweighted MSP equals 1 and the nearest
configuration whose CI contains zero changes only the covariate set. Consider
illustrative axis weights
\[
w_{\text{estimator}}=2,\qquad
w_{\text{covariates}}=3,\qquad
w_{\text{nonlinear}}=0.5,\qquad
w_{\text{trim}}=0.5.
\]
These weights are not normative; they simply encode the idea that switching
covariate adjustment or estimator class is a larger perturbation than adding
nonlinearity or overlap trimming. Under this weighting, the same configuration
remains optimal, but its cost rises from 1 to 3. \Cref{tab:weighted-lalonde}
shows how
weighted MSP changes the \emph{scale} of robustness even when the closest
configuration in \(\mathcal{N}(S)\) is unchanged.

\begin{table}[h]
    \centering
    \caption{Illustrative weighted MSP on the LaLonde raw-earnings subset.}
    \label{tab:weighted-lalonde}
    \begin{tabular}{@{}lccc@{}}
        \toprule
        configuration & unweighted cost & weighted cost & CI contains zero \\
        \midrule
        full covariates only & 1 & 3.0 & yes \\
        full covariates + trimming & 2 & 3.5 & yes \\
        full covariates + nonlinear & 2 & 3.5 & yes \\
        \bottomrule
    \end{tabular}
\end{table}

With \(w_{\min}=0.5\), \(w_{\max}=3\), and unweighted MSP \(=1\),
\Cref{prop:weighted-sandwich} gives
\[
0.5 \le \mathrm{wMSP} \le 3,
\]
and the observed value \(\mathrm{wMSP}=3\) attains the upper endpoint. This
small example is intended only as a usage demonstration; in applications,
weights should be justified substantively and reported alongside the
unweighted MSP, for example via expert elicitation, analyst-time proxies,
or agreed reporting conventions for major versus minor specification changes.

\paragraph{Card--Krueger difference-in-differences illustration.}
\label{par:card-krueger}
To show that MSP also applies in a standard difference-in-differences setting, we revisit the Card--Krueger fast-food study of the New Jersey minimum wage. The specification space has \(K=4\) axes (16 configurations): estimator (first-difference DiD versus long OLS DiD), controls (on/off), outcome scale (level employment versus \(\log(1+\text{employment})\)), and sample (full sample with permanent closures coded as zero versus balanced panel only). The baseline specification uses first-difference DiD, controls, level employment, and the full sample with closures. Its estimated treatment effect is \(2.785\) full-time-equivalent employees with 95\% bootstrap CI \([0.119, 5.340]\).

\begin{table}[h]
    \centering
    \caption{Compact MSP summary for the Card--Krueger DiD illustration.}
    \label{tab:card-krueger-msp}
    \begin{tabular}{@{}cccc@{}}
        \toprule
        baseline \(\hat\tau\) & baseline 95\% CI & MSP & nearest specification whose CI contains zero \\
        \midrule
        2.785 & [0.119, 5.340] & 1 & log outcome only \\
        \bottomrule
    \end{tabular}
\end{table}

Across the 16 configurations, 9 have CIs that contain zero and MSP equals \(1\) (\Cref{tab:card-krueger-msp}). The nearest specification whose CI contains zero changes only the outcome scale from levels to \(\log(1+\text{employment})\), yielding \(\hat\tau = 0.075\) with 95\% CI \([-0.038, 0.189]\). Thus the DiD conclusion is again fragile within the declared specification space: one admissible perturbation suffices to make the effect's CI contain zero. The point of this illustration is not to adjudicate the Card--Krueger debate, but to show that MSP extends naturally beyond the LaLonde benchmark to a familiar panel design. Because the level and log specifications target different estimands, this pooled grid should be read as a declared robustness space rather than as a common-scale additive summary. For substantive interpretation, we recommend faceting such spaces by outcome scale and reporting scale-specific MSP alongside the pooled diagnostic whenever both transformations are considered defensible.

\paragraph{Outcome-scale check for the NSW randomized sample.}
\label{par:nsw-scale-check}
In the NSW randomized sample, the permutation-calibration grid includes an outcome-scale axis (raw earnings vs.\ \(\log(1+\text{earnings})\)), so pooled configurations mix estimands. To confirm that the observed \(\mathrm{MSP}=\infty\) is not an artefact of cross-scale cancellation, we recompute MSP separately on the two scale-specific \(K=3\) subgrids (8 configurations each), holding fixed the same bootstrap procedure and resamples. Both scale-specific grids also yield \(\mathrm{MSP}=\infty\): no configuration has a 95\% CI that contains zero.

\paragraph{MSP vs.\ Fragility Index: construction details.}
\label{par:fragility}
The standard Fragility Index \citep{walsh2014statistical} is defined for 2\(\times\)2 tables and Fisher's exact test. To enable a direct comparison on the NSW continuous-outcome data, we define \(\widetilde{\mathrm{FI}}\) as the minimum number of treated-unit post-period earnings that must be set to zero for the bootstrap 95\% CI to contain zero (baseline specification: OLS, basic covariates, linear model, no trimming; \(B=500\) fixed resamples). We report two variants: \(\widetilde{\mathrm{FI}}_{\mathrm{adv}}\) zeros the highest-earning treated units first (adversarial ordering); \(\widetilde{\mathrm{FI}}_{\mathrm{rnd}}\) reports the median over 50 random orderings. \emph{Toy example.} Suppose a baseline specification yields a bootstrap 95\% CI of \([2,12]\) on a treated sample with earnings \((30,25,20,10,5)\). We then ask how many treated units must have earnings replaced by zero before the same CI contains zero. Under the adversarial ordering, we zero the highest earners first: after zeroing 30 the CI is still \([1,10]\), after zeroing 30 and 25 it is still \([0.5,8]\), and after zeroing 30, 25, and 20 it becomes \([-1,6]\), so \(\widetilde{\mathrm{FI}}_{\mathrm{adv}}=3\). Under a random ordering, the answer can differ; for example, under the order \((5,10,20,25,30)\), four zeroings may be needed before the CI contains zero. Repeating this over many random orderings and taking the median gives \(\widetilde{\mathrm{FI}}_{\mathrm{rnd}}\). Unlike MSP, which perturbs analyst choices while holding the data fixed, \(\widetilde{\mathrm{FI}}\) perturbs the data while holding the specification fixed. In the adversarial case, the three earnings zeroed are \$60{,}308, \$36{,}647, and \$34{,}099. Results and interpretation are reported in the main text (\Cref{tab:fragility-vs-msp}).

\paragraph{Synthetic FI--MSP flip experiment: construction and results.}
\label{par:fi-msp-flip}
The NSW randomized sample establishes one direction of divergence between data fragility and specification fragility. To show that the contrast is bidirectional, we simulate two regimes using the same \(K=4\) binary MSP axes as in Block 3: omit \(X_1\), omit \(X_2\), force linearity, and disable trimming. MSP is computed from common-bootstrap percentile CIs over all 16 configurations. Within each replicate, FI is computed on the baseline specification only, using a separate fixed bootstrap index matrix and an adversarial ordering that zeros treated outcomes from highest \(Y\) downward.

\emph{Regime A (tail-driven, specification-stable).} Treatment assignment is nearly random, with weak dependence on \(X_1\) and \(X_2\), so analyst choices induce little bias variation across specifications. A small random subset of treated units receives a large outcome bonus that is orthogonal to covariates. This concentrates evidential support in a few treated observations: zeroing those outcomes can quickly make the baseline CI contain zero, but changing specification rarely does.

\emph{Regime B (diffuse-effect, specification-sensitive).} The treatment effect is homogeneous across treated units, but treatment assignment depends strongly on \(X_1\), \(X_2\), and \(X_3\), while untreated outcomes depend negatively on \(X_1\) and \(X_2\). Under this opposite-sign confounding structure, the baseline specification recovers the effect, but omitting \(X_1\) is often enough to produce a null-compatible CI. Because the effect is diffuse rather than concentrated in a few treated observations, many adversarial zeroings are required before the baseline CI contains zero.

The paper-quality run uses \(n=800\), \(B_{\mathrm{MSP}}=100\), \(B_{\mathrm{FI}}=100\), and 120 Monte Carlo replicates per regime. In Regime A, the tail bonus is assigned to 3\% of treated units with magnitude \(15\sigma\); this yields a baseline significant result in 97.5\% of replicates while keeping the signal concentrated. In Regime B, the diffuse treatment effect is \(\tau=0.7\), strong enough for the baseline to be significant in every replicate but still vulnerable to the single-axis omission of \(X_1\).

\begin{table}[h]
    \centering
    \caption{Detailed summary for the synthetic FI--MSP flip experiment (\(n=800\), 120 replicates per regime).}
    \label{tab:fi-msp-flip-appendix}
    \begin{tabular}{@{}lcc@{}}
        \toprule
        summary & Regime A: tail-driven & Regime B: spec-sensitive \\
        \midrule
        \(P(\)baseline significant\()\) & 0.975 & 1.000 \\
        median FI & 7 & 38 \\
        mean FI & 7.24 & 38.19 \\
        \(P(\mathrm{FI}\le 3)\) & 0.067 & 0.000 \\
        \(P(\mathrm{FI}\le 10)\) & 0.883 & 0.000 \\
        median FI / \(n_{\mathrm{treated}}\) & 1.75\% & 10.27\% \\
        \(P(\mathrm{MSP}=1)\) & 0.017 & 0.925 \\
        \(P(\mathrm{MSP}=\infty)\) & 0.958 & 0.033 \\
        median finite MSP & 0 & 1 \\
        \bottomrule
    \end{tabular}
\end{table}

\Cref{tab:fi-msp-flip-appendix} shows a near-perfect reversal. In Regime A, FI is small but MSP is almost always infinite: the conclusion is data-fragile yet specification-robust. In Regime B, FI is large but MSP is almost always 1: the conclusion is data-robust yet specification-fragile. This is the synthetic counterpart to the NSW randomized illustration. The distinction is not simply that the two measures use different scales; they respond to different geometric objects. FI tracks how concentrated the support for significance is across observations, whereas MSP tracks how close the baseline is to the null-compatible region in specification space. For that reason, the two metrics are best read jointly: each supplies robustness information that the other is not designed to capture.

\paragraph{Simulation data-generating processes.}
\label{par:dgp}

\emph{Main simulation (Tables~\ref{tab:power},~\ref{tab:block4-comparison},~\ref{tab:decision-rules}).}
Four covariates \(X_1,X_2,X_3,X_4\overset{\mathrm{iid}}{\sim}\mathcal{N}(0,1)\) are drawn independently. Treatment is assigned via a logistic model, \(A\mid X\sim\mathrm{Bernoulli}(\sigma(\eta))\) with \(\sigma(t)=(1+e^{-t})^{-1}\) clipped to \([0.02,0.98]\). Under the additive regime:
\[
\eta = -0.2 + 0.8X_1 + 0.8X_2 + 1.1X_3,
\qquad
g(X) = -1.2X_1 - 1.0X_2 - 0.6X_3 + 0.2X_4.
\]
Under the interaction-heavy regime:
\[
\eta = -0.2 + 0.8X_1 + 0.6X_2 + 1.1X_3 + 1.0X_1X_2,
\]
\[
g(X) = -1.1X_1 - 0.7X_2 - 0.6X_3 + 0.2X_4 - 1.0X_1X_2 - 0.8X_2^2.
\]
The outcome is \(Y = \tau A + g(X) + \varepsilon\) with \(\varepsilon\sim\mathcal{N}(0,1)\). Opposite-sign confounding (treated units selected on covariates associated with lower untreated outcomes) means naive adjustment attenuates \(\hat\tau\).

\emph{FI--MSP flip experiment (Appendix, \Cref{tab:fi-msp-flip-appendix}).}
Regime A (tail-driven): treatment is nearly random, \(\eta=0.05X_1+0.05X_2\), and \(g(X)=0.1X_1+0.1X_2\). Significance is driven by a tail bonus assigned to a proportion \(p_{\mathrm{tail}}=0.03\) of treated units, each receiving an additive boost of \(15\sigma\) to their outcome; no specification omission removes this signal so \(\mathrm{MSP}=\infty\) in almost every replicate. Regime B (spec-sensitive): \(\eta=-0.2+1.2X_1+0.8X_2+0.6X_3\) and \(g(X)=-1.5X_1-1.0X_2-0.5X_1X_2\), with \(\tau=0.7\). The strong opposite-sign confounding via \(X_1\) means that omitting \(X_1\) alone is sufficient to flip the CI, so \(\mathrm{MSP}=1\) in almost every replicate.

\emph{Higher-dimensional experiment (\(K=10\), \Cref{tab:k10-search}).}
This is a \emph{synthetic search-surface} experiment, not a data-generating model. No outcome or assignment model is simulated. Instead, each replicate directly samples axis-level shift parameters: \(\tau_0=2.0\), \(c_0=0.5\), and six of ten axes are designated opposing with \(\delta_k\sim-\mathrm{Uniform}(0.18,0.85)\); the remaining four have \(\delta_k\sim\mathrm{Uniform}(0.00,0.25)\). For variable-width scenarios, \(\Delta c_k\sim\mathrm{Uniform}(-0.05,0.15)\). Pairwise interaction terms \(\gamma_{kj}\sim\mathcal{N}(0,\sigma_{\gamma}^2)\) are added to the point-estimate map in the interaction-heavy scenario (\(\sigma_\gamma=0.22\)), and exactness is verified against exhaustive enumeration in every replicate.

\paragraph{Replicate counts.}
Tables~\ref{tab:power} and~\ref{tab:fi-msp-flip-appendix} use \(R=120\) replicates per \(\tau\); Tables~\ref{tab:block4-comparison} and~\ref{tab:decision-rules} use \(R=200\) replicates per \(\tau\); Table~\ref{tab:sca-vs-msp} uses \(R=80\) replicates per \(\tau\); Table~\ref{tab:k10-search} uses 80 replicates per scenario (not per \(\tau\), since there is no treatment-effect parameter in the synthetic search surface). The differences reflect compute budgets: the 200-replicate runs omit permutation tests; the 80-replicate runs include either \(P=300\) permutations per replicate (SCA comparison) or the full exhaustive-enumeration verification (\(K=10\)). All main simulations use \(n=800\), so slightly differing point estimates for the same cell across tables (e.g.\ \(\hat P(\mathrm{MSP}=\infty\mid\tau=0)\)) trace to independent seeds and differing replicate counts.

\paragraph{Formal definitions of the simulation summary statistics.}
For each fixed effect size \(\tau\), let \(r=1,\dots,R\) index Monte Carlo
replicates, where each replicate consists of one newly simulated dataset,
evaluation of the full specification space, and computation of the resulting
MSP. Let \(\mathrm{MSP}_r(\tau)\in \{0,1,\dots,K,\infty\}\) denote the MSP in
replicate \(r\), and let \([L^{(0)}_r(\tau),U^{(0)}_r(\tau)]\) denote the
CI for the baseline configuration in that same replicate.
Then the columns reported in \Cref{tab:power} are computed as
\[
\widehat P(\mathrm{MSP}=\infty\mid \tau)
=
\frac{1}{R}\sum_{r=1}^R \mathbf{1}\{\mathrm{MSP}_r(\tau)=\infty\},
\]
\[
\operatorname{median}(\mathrm{MSP}\mid \mathrm{finite},\tau)
=
\operatorname{median}\bigl\{\mathrm{MSP}_r(\tau):\mathrm{MSP}_r(\tau)<\infty\bigr\},
\]
\[
\widehat P(\mathrm{MSP}\le 1\mid \tau)
=
\frac{1}{R}\sum_{r=1}^R \mathbf{1}\{\mathrm{MSP}_r(\tau)\le 1\},
\]
and
\[
\widehat P(\text{baseline significant}\mid \tau)
=
\frac{1}{R}\sum_{r=1}^R
\mathbf{1}\{0\notin [L^{(0)}_r(\tau),U^{(0)}_r(\tau)]\}.
\]
Equivalently, the final column is the empirical proportion of replicates in
which the baseline configuration rejects the null at the nominal confidence
level. The conditional median over finite MSP values is used because
\(\mathrm{MSP}=\infty\) occurs nontrivially in some regimes, so the finite and
infinite parts of the distribution are reported separately rather than
collapsed into a single summary.

\paragraph{Refinement check.}
\label{par:refinement-check}
To match the refinement proposition, we compare a coarse 3-axis grid to a refined 4-axis grid linked by a canonical embedding that preserves Hamming weight. Across 120 replicates, the inequality \(\mathrm{MSP}_{\mathrm{refined}} \le \mathrm{MSP}_{\mathrm{coarse}}\) holds in every case. The mean coarse and refined MSP are both \(1.067\). The experiment is therefore a consistency check rather than a source of strict empirical separation, but it aligns exactly with the theorem.

\paragraph{Additive versus interaction structure.}
\label{par:additive-interaction}
The additive theorem concerns the point-estimate map \(s \mapsto \hat\tau(s)\), so we evaluate it through fit diagnostics rather than through MSP predicted from configuration-specific CIs. Under the additive data-generating process, the additive approximation attains mean \(R^2 = 0.985\) and mean absolute error \(0.048\). Under the interaction-heavy process, the fit worsens to \(R^2 = 0.928\) and mean absolute error \(0.085\). Expressed in percentage terms, moving from the additive to the interaction-heavy DGP lowers mean \(R^2\) by about 5.8 percentage points and raises mean absolute error by about 77\%. The main takeaway is qualitative rather than dramatic: additive fit remains reasonably good in both regimes, but it deteriorates consistently once interaction structure is introduced.

\paragraph{Higher-dimensional synthetic experiment (\(K=10\)).}
To study a moderate-size specification space closer to what an applied
robustness exercise might report, we ran a synthetic search experiment on
\(K=10\) binary axes. This yields \(2^{10}=1024\) configurations, so exhaustive
search is still feasible and can serve as exact ground truth. We compare three
methods: brute-force enumeration, the additive greedy rule, and
branch-and-bound using the additive lower bound \(\widehat{\mathrm{MSP}}_{\mathrm{add}}\) for pruning. Pruning by a lower bound is exact only when that bound is admissible; the additive bound is admissible when interactions in \(\hat\tau(\cdot)\) and \(c(\cdot)\) do not exceed the additive prediction, and outside that regime one should disable that pruning step or use a verified conservative bound. We therefore certify exactness by comparing branch-and-bound output against exhaustive enumeration in every replicate.
We study three scenarios: (i) additive point estimates with constant width,
to verify \Cref{thm:additive-char} beyond \(K \le 5\); (ii) additive point estimates
with variable widths, to test whether mild width heterogeneity disrupts the
heuristic; and (iii) interaction-heavy point estimates with variable widths,
to expose the failure mode when the additive assumptions break. For each
replicate we also record two compact diagnostics: a bounded-step
ratio
\[
\rho = \max_k |\delta_k-\Delta c_k|/(2c_0),
\]
which tracks whether any single axis can overshoot the upper null-band gap
in the additive-width heuristic, and the coefficient of variation of the
configuration-specific half-widths,
\[
\mathrm{CV}(c)
=
\frac{\mathrm{sd}\{c(s):s\in S\}}{\mathrm{mean}\{c(s):s\in S\}}.
\]

\begin{table}[h]
  \caption{Higher-dimensional synthetic experiment (\(K=10\), 80 replicates per scenario). Times are average milliseconds per replicate \emph{for the combinatorial search only}, measured over a precomputed grid of axis shifts and CI half-widths; baseline model fitting and bootstrap costs are excluded (not applicable for this synthetic search-surface experiment). Panel~A reports MSP and greedy accuracy; Panel~B reports diagnostics and runtimes.}
  \label{tab:k10-search}
  \centering

  \textbf{Panel A: MSP and greedy accuracy.}\par
  \begin{tabular}{@{}lcccc@{}}
    \toprule
    scenario & exact MSP & greedy MSP & greedy exact & greedy MAE \\
    \midrule
    additive + constant width & 2.64 & 2.64 & 1.000 & 0.000 \\
    additive + variable width & 2.44 & 2.44 & 1.000 & 0.000 \\
    interaction-heavy + variable width & 2.21 & 2.55 & 0.725 & 0.338 \\
    \bottomrule
  \end{tabular}

  \vspace{0.75ex}
  \textbf{Panel B: diagnostics and runtimes.}\par
  \begin{tabular}{@{}lccccc@{}}
    \toprule
    scenario & mean \(\rho\) & mean CV\((c)\) & B\&B exact & brute ms & B\&B ms \\
    \midrule
    additive + constant width & 0.782 & 0.000 & 1.000 & 4.60 & 0.07 \\
    additive + variable width & 0.817 & 0.156 & 1.000 & 4.38 & 0.06 \\
    interaction-heavy + variable width & 0.830 & 0.157 & 1.000 & 8.87 & 0.49 \\
    \bottomrule
  \end{tabular}
\end{table}

\Cref{tab:k10-search} shows the intended pattern. Under the additive constant-width regime,
greedy is exact in every replicate, as predicted by \Cref{thm:additive-char}. With mild
width variation but no interactions, greedy remains exact in this experiment,
suggesting that modest departures from constant width need not break the
ordering in practice. The diagnostics sharpen this interpretation: in both
additive regimes the mean bounded-step ratio stays below one
(\(0.782\), \(0.817\)), and even with variable widths the mean width CV is
only \(0.156\). Under interaction-heavy, variable-width perturbations,
greedy degrades while branch-and-bound matches the exhaustive enumeration in every replicate (Panel~B), even though the mean
diagnostics remain similar (\(\rho=0.830\), \(\mathrm{CV}(c)=0.157\)). Exactness here is empirically verified rather than analytically guaranteed, since the additive pruning bound is not admissible in general under interactions. Thus
the observed greedy failures are driven primarily by interaction structure,
not by width heterogeneity alone. The point is not to position MSP as a tool
for massive hyperparameter grids, but to show that the search narrative remains
coherent on moderate specification spaces of the sort researchers can still
audit and interpret axis by axis. In that regime, the diagnostics provide a
compact scope check for the additive story, while exact branch-and-bound
remains available when that story is doubtful.

\paragraph{CI construction ablation on the LaLonde raw-earnings subset.}
Because MSP is defined through CIs that contain zero, it is
useful to check how much the empirical illustration depends on the interval
construction itself. We therefore revisit the raw-earnings LaLonde subset,
fix the same \(K=4\) specification space (16 configurations), and hold fixed
everything except the CI rule: dataset, point-estimation pipeline, bootstrap
size, seed, and the common bootstrap index matrix \(U\) used across all
configurations. We compare percentile bootstrap intervals, bias-corrected
(BC) bootstrap intervals, and bootstrap-Wald intervals based on the same
stored bootstrap draws (\(B=500\)).

\begin{table}[h]
    \centering
    \caption{CI-construction ablation on the LaLonde raw-earnings subset (\(K=4\), 16 configurations, common bootstrap matrix, \(B=500\)). The main-text CI \([-\$380,\$2{,}196]\) for the nearest configuration (Section~\ref{sec:lalonde}) uses \(B=200\); the percentile-bootstrap entry here (\([-320.2, 2001.1]\)) uses \(B=500\) with a different seed, which accounts for the difference.}
    \label{tab:ci-ablation-raw}
    \begin{tabular}{@{}lcccc@{}}
        \toprule
        CI method & MSP & \# CIs containing zero & nearest configuration & CI at nearest configuration \\
        \midrule
        percentile & 1 & 5 & full covariates & [-320.2, 2001.1] \\
        BC & 1 & 5 & full covariates & [-264.6, 2089.5] \\
        bootstrap-Wald & 1 & 5 & full covariates & [-319.5, 2077.5] \\
        \bottomrule
    \end{tabular}
\end{table}

\Cref{tab:ci-ablation-raw} shows that the signal is stable across these reasonable constructions. MSP is unchanged,
the closest configuration in \(\mathcal{N}(S)\) remains the same one-axis perturbation, and
the count of specifications whose CIs contain zero is identical across methods.
Thus, in this illustration, CI construction affects the exact endpoints but
not the qualitative conclusion of concentrated vulnerability.

\paragraph{ROC curves and false-positive rates for the decision-rule comparison.}

\(\mathrm{MSP}=\infty\) means no zero-crossing CI exists in the declared space, typically because overshoot drives every perturbation past zero rather than into it. For the binary classifier, \(\mathrm{MSP}=\infty\) is scored as non-robust and non-positive: the threshold rule \(\mathrm{MSP}\ge 2\) requires a finite value of at least 2, so \(\mathrm{MSP}=\infty\) does not trigger it; the ROC score maps \(\mathrm{MSP}=\infty\) to \(-999\), placing it at the ``most fragile'' end of the score axis. Consequently, the 81.5\% of \(\tau=0.3\) replicates with \(\mathrm{MSP}=\infty\) are classified as negatives by the \(\mathrm{MSP}\ge 2\) rule (correctly, since overshoot indicates no falsifying configuration exists), and the FPR of 0.010 reports only the 2/200 finite-MSP replicates that exceed the threshold. Sign reversals are not currently considered null-compatible; extending \(\mathcal{N}(S)\) to include them is a natural direction for future work.

\begin{table}[h]
    \centering
    \caption{False positive rate at \(\tau=0.3\) (200 replicates). Confounding overshoot makes dispersion-based rules unreliable; MSP thresholds are substantially more precise.}
    \label{tab:decision-rules}
    \begin{tabular}{@{}lc@{}}
        \toprule
        Rule & FPR at \(\tau=0.3\) \\
        \midrule
        \emph{share\_sig\_any} \(>0.5\)    & 1.000 \\
        \emph{share\_sig\_pos} \(>0.2\)    & 0.935 \\
        \emph{share\_null\_compat} \(<0.2\) & 0.895 \\
        \emph{share\_sig\_any} \(>0.9\)    & 0.860 \\
        \(\mathrm{MSP} \ge 1\) (finite)    & 0.185 \\
        \(\mathrm{MSP} \ge 2\)             & \textbf{0.010} \\
        \emph{share\_sig\_pos} \(>0.5\)    & 0.000 (zero power) \\
        \bottomrule
    \end{tabular}
\end{table}

\begin{figure}[h]
    \centering
    \includegraphics[width=0.88\linewidth]{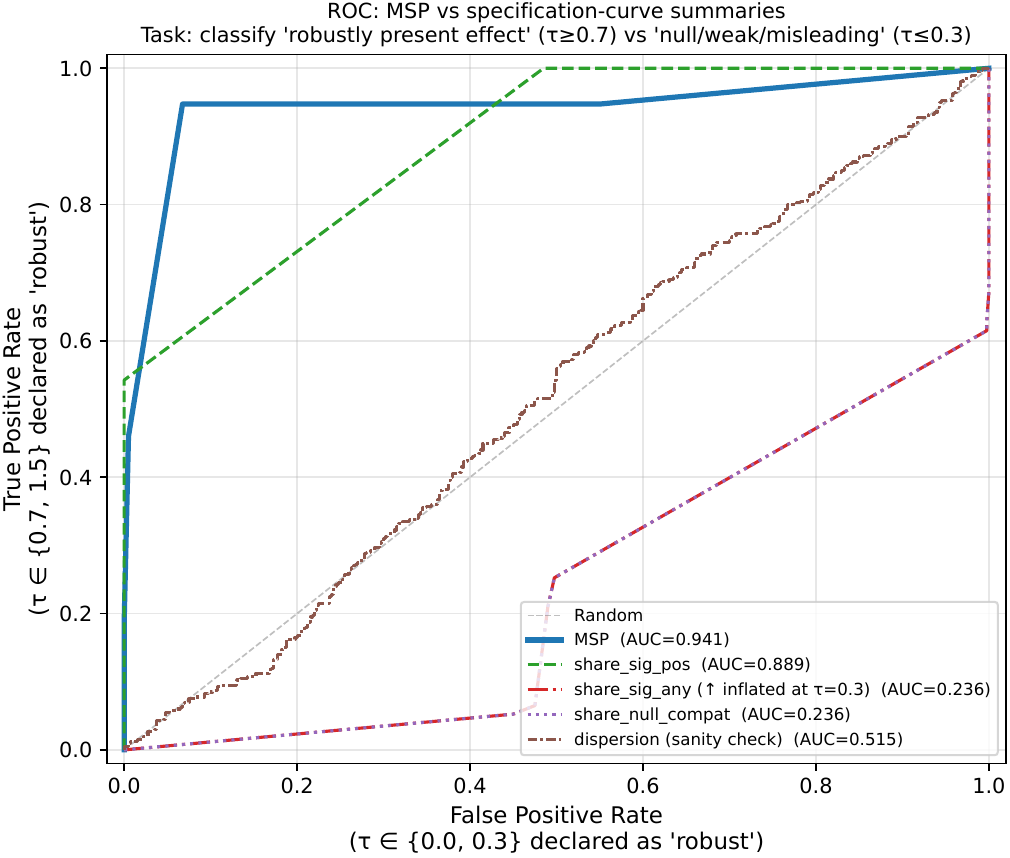}
    \caption{ROC curves for MSP and threshold-based baseline summaries in the binary decision task. MSP dominates the alternatives, while \emph{share\_significant\_any} is anti-informative because the overshoot regime at \(\tau=0.3\) produces many significant specifications in the wrong direction.}
    \label{fig:block6-roc}
\end{figure}

\paragraph{SCA joint test.}
\label{par:sca-joint-test}
The SCA joint test~\citep{simonsohn2020specification} permutes treatment labels and asks whether the observed specification curve is extreme relative to the null distribution, a valid test of the joint null hypothesis that all specifications are consistent with no effect. We implement two variants using \(P=300\) permutations per replicate (\(80\) replicates per \(\tau\)): SCA-pos (share of configs with \(\hat\tau>0\)) and SCA-median (median \(\hat\tau\)). Both observed and permuted statistics use point estimates, ensuring valid permutation p-values.

\begin{table}[h]
    \centering
    \caption{SCA joint test rejection rates (\(\alpha=0.05\)) and MSP summary (\(80\) replicates per \(\tau\), \(P=300\) permutations). The SCA test asks ``is the curve inconsistent with the null?''; MSP asks ``how far to the nearest configuration in \(\mathcal{N}(S)\)?''}
    \label{tab:sca-vs-msp}
	    \begin{tabular}{@{}r cc cc@{}}
	        \toprule
	        \(\tau\) &
	        \(P(\mathrm{MSP}=\infty)\) &
	        \(\mathrm{med}(\mathrm{MSP}\!\mid\!\mathrm{finite})\) &
	        SCA-pos reject &
	        SCA-med reject \\
        \midrule
        0.0 & 0.075 & 0 & 0.000 & 0.000 \\
        0.3 & 0.738 & 1 & 0.000 & 0.000 \\
        0.7 & 0.037 & 1 & 0.000 & 0.025 \\
        1.5 & 0.100 & 2 & 0.000 & \textbf{1.000} \\
        \bottomrule
    \end{tabular}
\end{table}

The two approaches answer different questions and are therefore complementary rather than competing. In our implementation, the SCA procedure is a permutation-based joint test on point-estimate summaries of the specification curve: it asks whether the curve looks globally extreme relative to a permutation reference. MSP is not a test but a metric. Its value lies in the additional structure it provides: how close the result is to being overturned and the configuration that achieves MSP. \Cref{tab:sca-vs-msp} shows that the SCA summaries are much coarser in this design. SCA-pos never rejects. SCA-median rejects only in the strongest regime (\(\tau=1.5\)) and only rarely at \(\tau=0.7\) (2.5\%). By contrast, MSP separates all four regimes and identifies the nearest admissible falsification.

\paragraph{Bootstrap sensitivity check.}
\label{par:bootstrap-sensitivity}
MSP is computed from CIs, so its values depend on the number of bootstrap resamples \(B\). We verify stability by rerunning the comparison experiment with \(B=300\) (versus \(B=100\) in the main analysis), using identical seeds, sample sizes, and replicate counts.

\begin{table}[h]
    \centering
    \caption{Bootstrap sensitivity: \(P(\mathrm{MSP}=\infty)\) and median MSP (finite) at \(B=100\) versus \(B=300\). Dispersion is identical by construction (it does not involve CIs).}
    \label{tab:ci-sensitivity}
    \begin{tabular}{@{}r cc cc@{}}
        \toprule
        & \multicolumn{2}{c}{\(P(\mathrm{MSP}=\infty)\)}
        & \multicolumn{2}{c}{median MSP (finite)} \\
        \cmidrule(lr){2-3}\cmidrule(lr){4-5}
        \(\tau\) & \(B=100\) & \(B=300\) & \(B=100\) & \(B=300\) \\
        \midrule
        0.0 & 0.085 & 0.060 & 0 & 0 \\
        0.3 & 0.815 & 0.810 & 1 & 1 \\
        0.7 & 0.025 & 0.010 & 1 & 1 \\
        1.5 & 0.080 & 0.090 & 2 & 2 \\
        \bottomrule
    \end{tabular}
\end{table}

The qualitative pattern is identical across both settings (\Cref{tab:ci-sensitivity}). \(P(\mathrm{MSP}=\infty)\) differs by at most 0.025 in absolute terms. Median finite MSP is unchanged in all four regimes. Dispersion is unaffected by \(B\) by construction. These comparisons confirm that the main findings are not an artefact of bootstrap granularity. All results in the paper use percentile bootstrap intervals, held fixed across specifications to isolate the geometry of the specification space rather than differences in interval construction. We therefore do not claim CI-method invariance. Instead, \Cref{tab:ci-sensitivity} shows that the main patterns are stable to materially finer bootstrap granularity.

\paragraph{MSP--\(\alpha\) sensitivity curve.}
The set of specifications whose CIs contain zero depends on the confidence level \(1-\alpha\):
\[
F_\alpha = \{s \in S : 0 \in \mathrm{CI}_{1-\alpha}(\hat\tau(s))\},
\qquad
\mathrm{MSP}(\alpha) = \min_{s\in F_\alpha}\|s\|_0.
\]
Because a smaller \(\alpha\) yields a wider interval, \(\alpha_1<\alpha_2\)
implies \(F_{\alpha_1}\supseteq F_{\alpha_2}\), so
\(\mathrm{MSP}(\alpha)\) is weakly non-decreasing in \(\alpha\).
The curve summarises how falsification complexity changes with inference
strictness: a flat curve indicates structural fragility or structural
robustness (depending on the level), while a steep curve indicates
statistical fragility, robustness that is sensitive to CI width rather
than specification geometry.

We compute the MSP--\(\alpha\) curve on the LaLonde NSW--CPS illustration
by bootstrapping once (\(B=200\)) and re-thresholding the stored
per-configuration bootstrap distributions at each \(\alpha\) in a grid.
No re-bootstrapping is needed.

\begin{table}[h]
    \centering
    \caption{MSP--\(\alpha\) sensitivity curve on the LaLonde NSW--CPS
    illustration (\(K=5\), 32~configurations, \(B=200\)). The 32-configuration count arises from adding an outcome-scale axis (raw vs.\ log earnings) to the \(K=4\) main-text space; the main-text LaLonde illustration (Section~\ref{sec:lalonde}) uses the raw-earnings \(K=4\) subset (16 configurations).}
    \label{tab:msp-alpha}
    \begin{tabular}{@{}cccl@{}}
        \toprule
        \(\alpha\) & \(1-\alpha\) & MSP(\(\alpha\)) & configs with CI containing zero \\
        \midrule
        0.005 & 0.995 & 1  & 17 \\
        0.01  & 0.99  & 1  & 16 \\
        0.02  & 0.98  & 1  & 14 \\
        0.05  & 0.95  & 1  & 10 \\
        0.10  & 0.90  & 1  & 10 \\
        0.15  & 0.85  & 2  & 7  \\
        0.20  & 0.80  & 2  & 6  \\
        0.30  & 0.70  & 2  & 6  \\
        0.40  & 0.60  & 2  & 6  \\
        0.50  & 0.50  & 2  & 6  \\
        \bottomrule
    \end{tabular}
\end{table}

\Cref{tab:msp-alpha} shows that the curve is flat at \(\mathrm{MSP}=1\) for \(\alpha\le 0.10\) and rises
to 2 at \(\alpha\ge 0.15\), where it remains stable through
\(\alpha=0.50\). This indicates that fragility in this illustration is
primarily structural: the specification space contains a nearby configuration in \(\mathcal{N}(S)\)
regardless of inference strictness. The transition at
\(\alpha\approx 0.15\) reflects the point at which the single-axis
configuration loses the property that its CI contains zero under narrower intervals, pushing MSP
upward by one step.

Two interpretive caveats apply. First, the MSP--\(\alpha\) curve is
\emph{not} a significance curve or a p-value function; it traces
combinatorial robustness across inference levels. Second, because the
LaLonde specification space mixes raw and log outcome scales, the curve
should ideally be computed within a fixed estimand scale (as recommended
in the main text). The values above use the full pooled grid for
illustration.

\paragraph{Variable-width greedy lower-bound pass rates (\Cref{prop:vw-greedy,prop:vw-auto-feas}).}
\label{par:vw-greedy}
For each replicate in the empirical illustrations and the \(K=10\) synthetic experiment, we run the greedy prefix of \Cref{prop:vw-greedy} on the effective shifts \(e_k=\Delta c_k-\delta_k\), then check whether the resulting configuration satisfies the lower-bound condition \(\hat\tau(s_g)\ge -c(s_g)\). A pass means greedy gives the exact MSP (\Cref{prop:vw-greedy}); a fail means the greedy candidate is infeasible and branch-and-bound is invoked to recover the exact solution.

\begin{table}[h]
    \centering
    \caption{Variable-width greedy lower-bound pass rate. In additive regimes greedy is exact in every replicate; in the interaction-heavy regime branch-and-bound recovers MSP in all failing cases.}
    \label{tab:vw-greedy-check}
    \begin{tabular}{@{}lcccc@{}}
        \toprule
        setting & replicates & greedy exact & B\&B exact & mean \(\rho\) \\
        \midrule
        LaLonde NSW--CPS (obs., raw scale, \(K=4\)) & 1 & 1.000 & --- & 1.641 \\
        Card--Krueger DiD (\(K=4\)) & 1 & 1.000 & --- & 0.038 \\
        ML pipeline (UCI Adult, \(K=4\)) & 1 & 1.000 & --- & 1.317 \\
        sim.\ additive + constant width (\(K=10\)) & 80 & 1.000 & 1.000 & 0.782 \\
        sim.\ additive + variable width (\(K=10\)) & 80 & 1.000 & 1.000 & 0.817 \\
        sim.\ interaction + variable width (\(K=10\)) & 80 & 0.725 & 1.000 & 0.830 \\
        \bottomrule
    \end{tabular}
\end{table}

\Cref{tab:vw-greedy-check} is consistent with the theory. Greedy is exact in every empirical example: LaLonde (\(\rho=1.641\), outside \Cref{prop:vw-auto-feas}'s sufficient condition but lower-bound check passes), Card--Krueger (\(\rho=0.038\), well within the automatic-feasibility regime), and the ML pipeline (\(\rho=1.317\), lower-bound passes). This illustrates that \Cref{prop:vw-auto-feas} is a \emph{sufficient} condition for automatic feasibility; greedy can remain exact outside it, as the lower-bound check in \Cref{prop:vw-greedy} directly verifies. In the \(K=10\) simulations, greedy is exact in every additive replicate regardless of whether CI widths are constant or variable. In the interaction-heavy regime, 27.5\% of replicates fail; in every such case branch-and-bound recovers the exact MSP, confirming that the post-hoc lower-bound check cleanly separates the safe cases from those requiring search.

\paragraph{Practical guidance.}
\label{par:practical-guidance}
In applications, the specification space should be auditable rather than opportunistic. When possible, axes should be prespecified and tied to ordinary analyst degrees of freedom that are substantively defensible ex ante, such as estimator class, covariate adjustment, functional form, overlap handling, or outcome transformation. We recommend specification spaces that keep the estimand fixed across axes and make each axis correspond to one defensible analytic move. When outcome scale or encoding choices change the estimand or distort Hamming distance, results should be faceted or the space should be redesigned rather than folded into a single MSP summary. The space should exclude clearly dominated axes and axes introduced post hoc to inflate or depress MSP. A practical standard is to report the rationale and admissible levels for each axis, ideally in an appendix or pre-analysis plan.

In practice, weights should be treated as a reporting device rather than as a hidden tuning knob. Reasonable choices include expert elicitation about which analyst moves are substantively larger, analyst-time or implementation-cost proxies, or simple consensus conventions agreed in advance within a research team. We therefore recommend reporting unweighted MSP together with one or more transparently justified weighted versions, plus a short sensitivity check over plausible weight ranges when weights are contestable.

\paragraph{Reproducibility, assets, and impacts.}
The anonymized repository linked in the submission contains the experiment scripts, pinned Python requirements, generated CSV outputs, and build instructions. The main entry point is \texttt{experiment/run\_paper\_experiments.py}; by default it uses paper-specific seeds, one worker, and fixed BLAS/OpenMP thread counts for deterministic execution within the pinned environment. All experiments were run on an Apple M2 (8-core CPU, 16\,GB unified memory) with no GPU acceleration. The complete pipeline---LaLonde illustrations, permutation calibration (\(P=200\)), and all simulation blocks (up to 200 replicates per \(\tau\))---finishes in under one hour wall-clock time. Smoke-test modes (reduced replicates and permutations) are available for quick dependency checks.

The empirical illustrations use public benchmark data: LaLonde/NSW earnings data as distributed through standard public sources, the Card--Krueger minimum-wage data, and the UCI Adult dataset accessed through OpenML/scikit-learn. We cite the original data sources in the paper and include scripts that document how each dataset is obtained or parsed. No new human-subjects data, scraped dataset, or model artifact is released.

\paragraph{Broader impacts.}
MSP is a methodological tool for empirical causal inference. Its primary societal benefit is improving the transparency and auditability of causal claims in policy-relevant research: by reporting the minimum number of coordinated analyst-decision changes needed to overturn a conclusion, researchers and readers can assess whether a result rests on genuine robustness or on an unexplored region of the specification space. This is especially relevant in high-stakes domains such as economics, medicine, and public policy, where the same dataset is often analyzed under different defensible choices and conclusions can be contested.

A potential negative use is selective reporting: an analyst could construct a narrow specification space designed to maximize MSP, creating the appearance of robustness without genuine insensitivity. To guard against this, MSP should be reported alongside a transparent, prespecified description of the axes and their justification. Another risk is over-interpreting large MSP values in observational settings as causal certificates; the paper is explicit that permutation calibration is only valid under sharp-null randomization. Responsible use requires acknowledging both the conditional nature of the metric and the limits of the declared specification space.

\end{document}